\DeclareMathOperator{\C}{\mathcal{C}}
\DeclareMathOperator{\Aut}{Aut}
\DeclareMathOperator{\End}{End}
\DeclareMathOperator{\Gal}{Gal}
\DeclareMathOperator{\rk}{rk}
\newtheorem*{theorem*}{Theorem}
\newtheorem{theorem}{Theorem}[section]
\newtheorem{lemma}[theorem]{Lemma}
\newtheorem{definition}[theorem]{Definition}
\newtheorem{proposition}[theorem]{Proposition}
\newtheorem{remark}[theorem]{Remark}
\newcommand{\fqn}{\mathbb{F}_{q^n}}
\newcommand{\F}{{\mathbb F}}
\newcommand{\Tr}{{\mathrm{Tr}}}
\newcommand{\GL}{\hbox{{\rm GL}}}
\newcommand{\K}{{\mathbb K}}
\newcommand{\fq}{{\mathbb F}_{q}}
\newcommand{\N}{\mathrm{N}}
\newcommand{\gcrd}{\mathrm{gcrd}}
\newcommand{\lid}{\mathcal{I}_{\ell}}
\newcommand{\rid}{\mathcal{I}_{r}}
\newcommand{\cen}{\mathrm{Cen}}
\newcommand{\Ima}{\mathrm{Im}}
\newcommand{\M}{\mathrm{M}}
\title{Adjoint and duality for rank-metric codes in a skew polynomial framework}
\author{José G\'omez-Torrecillas, F. J. Lobillo, Gabriel Navarro and Paolo Santonastaso}
\date{}
\begin{document}

\maketitle

\begin{abstract}
Skew polynomial rings provide a fundamental example of noncommutative principal ideal domains. Special quotients of these rings yield matrix algebras that play a central role in the theory of rank-metric codes. Recent breakthroughs have shown that specific subsets of these quotients produce the largest known families of maximum rank distance (MRD) codes. In this work, we present a systematic study of transposition and duality operations within quotients of skew polynomial rings. We develop explicit skew-polynomial descriptions of the transpose and dual code constructions, enabling us to determine the adjoint and dual codes associated with the MRD code families recently introduced by Sheekey \emph{et al}. Building on these results, we compute the nuclear parameters of these codes, and prove that, for a new infinite set of parameters, many of these MRD codes are inequivalent to previously known constructions in the literature.
\end{abstract}

\textbf{Keywords:} Skew polynomial ring; rank-metric code; adjoint code; dual code.\\
\textbf{MSC2020:}  16S36; 16S50; 11T71.

\section{Introduction}

Skew polynomial rings represent the best-known example of noncommutative principal ideal domains. First introduced and studied in seminal paper of Ore \cite{ore1933theory}, these rings have proven highly useful in various algebraic and geometric contexts. In this paper, we focus on the skew polynomial rings $R=\mathbb{F}_{q^n}[x;\sigma]$, where $\mathbb{F}_{q^n}$ denotes the finite field with $q^n$ elements and $\sigma$ is a generator of the Galois group $\Gal(\mathbb{F}_{q^n}/\mathbb{F}_q)$. These rings are characterized by their noncommutative multiplication rule, explicitly defined by
\[
x \alpha = \sigma(\alpha) x
\]
for every $\alpha \in \mathbb{F}_{q^n}$, and extended to all elements of $R$ via associativity and distributivity.

Within these rings, irreducible monic polynomials $F(y) \in \mathbb{F}_q[y]$ generate maximal twosided ideals of the form $RF(x^n)$, thus producing quotient rings $R_F = R/RF(x^n)$ that are simple and left Artinian. Consequently, one obtains the following ring isomorphism:
\begin{equation}  \label{eq:isomorphismRF}
R_F \cong M_n\left(\F_{q^{s}}\right) 
,\end{equation}
where \(s = \deg(F)\) as polynomial in \(\F_q[y]\), 
see e.g. \cite[Theorem 1.2.19]{jacobson2009finite}. 

The matrix representation arising from quotients of skew polynomial rings as in \eqref{eq:isomorphismRF} has led to the construction of the largest families of rank-metric codes. A rank-metric code can be considered a subset of the metric space $(M_n(\F),\rk)$, where $\F$ is a finite field and $\rk$ denotes the matrix rank. In recent years, rank-metric codes have attracted significant attention due to their applications in various areas of communication and security. We refer to \cite{gorla2018codes,bartz2022rank} for a comprehensive introduction to rank-metric codes and an explanation of their most significant applications. Among rank-metric codes, of particular interest is the family of \emph{maximum rank distance (MRD)} codes. These are codes that have optimal parameters: for the given size  and minimum rank distance, they have the maximum cardinality.

By employing special quotients of skew polynomial rings, new families of MRD codes were introduced in \cite{sheekey2020new} and \cite{lobillo2025quotients}. As demonstrated in these works, these families constitute the largest known constructions of MRD codes.

In general, determining whether two rank-metric codes with the same parameters are equivalent is a challenging problem. In \cite{liebhold2016automorphism,lunardon2018nuclei}, and later in \cite{sheekey2020new}, algebraic invariants associated with rank-metric codes, namely the \emph{left and right idealisers}, the \emph{centraliser}, and the \emph{centre}, were considered. These structures have proven to be powerful tools in establishing the inequivalence of many recently constructed MRD codes compared to previously known families. In particular, using these invariants, it has been shown that MRD codes constructed via skew polynomial rings are inequivalent to previously known MRD code constructions for infinitely many choices of parameters. Nevertheless, the explicit computation of these invariants remains an open problem for many parameter choices within these families.

Moreover, starting from a rank-metric code $\C$ in $M_n(\F)$, it is possible to define two other codes. The first is the \emph{adjoint code}, consisting of the transposes of all codewords of $\C$. Clearly, the adjoint code retains the same metric properties as the original code; hence, the adjoint of an MRD code is itself an MRD code. Furthermore, in $M_n(\F)$ one can define the following non degenerate  bilinear form: 
\begin{equation} \label{dualityintro}
    (A,B) \in M_n(\F) \times M_n(\F) \longmapsto \Tr_{\F/\F'}(\Tr(AB^{\top})), 
\end{equation}
where $\F'$ denotes the prime field of $\F$. Thus, the \emph{dual} of a rank-metric code is defined as the dual of $\C$ with respect to the bilinear form \eqref{dualityintro}. Delsarte, by using the theory of association schemes, proved that the dual of an MRD code is again an MRD code \cite{delsarte1978bilinear}. 

We emphasize that the problem of explicitly determining the adjoint and dual codes of the MRD codes introduced in \cite{sheekey2020new} and \cite{lobillo2025quotients} has not yet been addressed in the literature. Indeed,
this task requires
restating the notions of adjoint and dual codes for the rings $R_F$, as
they are Frobenius algebras.

In this paper, we develop a theory of transposition and duality within the framework of skew polynomial rings by identifying the matrix algebra $M_n(\F_{q^s})$ with the quotient ring $R_F = R/RF(x^n)$ via the isomorphism \eqref{eq:isomorphismRF}. Let \[M_{R_F}:R_F \rightarrow M_n(\F_{q^s})\] be an isomorphism of rings. First, we provide an explicit skew-polynomial description of the transposition operation on $M_n(\F_{q^s})$; specifically, for every element $a \in R_F$, we characterize the element in $R_{\hat{F}}$ associated with the transpose of the matrix $M_{R_F}(a)$. Secondly, given a subset $S \subseteq R_F$ defining a rank-metric code $\C = M_{R_F}(S)$, we determine the explicit subset of $R_{\hat{F}}$ that corresponds to the dual code of $\C$. The duality theory we present relies crucially on the concept of a Frobenius algebra.

Based on the adjoint and duality theory thus established, we explicitly determine the adjoint and dual codes of the largest known families of MRD codes introduced in \cite{sheekey2020new} and \cite{lobillo2025quotients}. Additionally, we compute the idealisers and centralisers of these codes for several choices of parameters previously unresolved in the literature. These computations allow us to demonstrate that these families yield new MRD codes for infinitely many additional parameter sets. This further underscores the significance, richness, and generality of these recent constructions in the theory of rank-metric codes.

\section{Quotients of skew polynomial rings and matrix rings}

Let us fix some notation. In this paper \(\F\) denotes a finite field and \(\F_q\) the finite field with \(q = p^e\) elements where $p$ is a prime and $e$ positive integer.  We consider $\sigma$ to be a generator of the Galois group $\Gal(\mathbb{F}_{q^n}/\mathbb{F}_q)$, and we work with the skew polynomial ring $R = \mathbb{F}_{q^n}[x; \sigma]$. Its elements are polynomials in \(x\) with the coefficients in \(\F_{q^n}\) written on the left of the monomials \(x^i\). The multiplication is skewed according to the rule \(x\alpha = \sigma(\alpha)x\), for all \(\alpha \in \F_{q^n}\). Hence, \(R\) is a noncommutative ring, unless \(n = 1\). The center of this ring is \(Z(R) = \F_q[x^n]\). 

Left and right Euclidean division algorithms work on \(R\). As a consequence, every left and every right ideal is principal, which guarantees the existence of common (left and right) greatest divisors and least  multiples.  For instance, given \(f, g \in R\), their greatest  common right divisor \(\gcrd(f,g)\) is determined, up to left multiplication by a unit, as the generator of \(Rf + Rg\).  Also, left and right Bezout identities are available.

Let $F(y) \neq y$ be an irreducible polynomial of $\F_q[y]$ with degree $s$. Then $F(x^n)\in Z(R)=\F_q[x^n]$ and \(RF(x^n)\) is a twosided ideal of \(R\). We may then consider the quotient ring
\[ R_F=\frac{R}{RF(x^n)}. 
\]

When we declare \(a \in R_F\), we will often understand that
\[
a = \sum_{i=0}^{ns-1}a_ix^i + RF(x^n),
\]
that is, the equivalence class \(a\) is represented by the unique skew polynomial \(a(x) = \sum_{i=0}^{ns-1}a_ix^i \in R\) of least degree belonging to it. 

The center of $R_F$ is denoted by 
\(E_F\), and it is isomorphic to  \(\frac{\fq[y]}{(F(y))}\). 
Any element in $E_F$ is of the form $a(x)+RF(x^n)$, for some $a(x) \in Z(R) = \F_q[x^n]$.

Since $F(y)$ is an irreducible polynomial, we get that  $E_F$ is a field such that $[E_F:\F_q]=\deg(F)=s$, and so $E_F \cong \F_{q^s}$. Moreover, $RF(x^n)$ is a maximal twosided ideal of $R$ and so $R_F$ is a central simple algebra over $E_F$ having dimension $n^2$ and dimension $n^2s$ over $\F_q$, see e.g. \cite{gomez2019computing}. As a consequence, by Wedderburn–Artin Theorem, there is an $E_F$-algebra isomorphism
\begin{equation} \label{eq:artin}
    \frac{R}{RF(x^n)} \cong M_n(E_F) \cong M_n(\F_{q^s}). 
\end{equation}

For an $\F_{q^s}$-algebra isomorphism $\M_{R_F}:R/RF(x^n) \rightarrow M_n(\F_{q^s})$, we can identify any element $a \in R_F$ with its image $\M_{R_F}(a)$ in $M_n(\F_{q^s})$, via the isomorphism $\M_{R_F}$. Note that if $\M_{R_F}': R_F \rightarrow M_n(\F_{q^s})$ is another $\F_{q^s}$-algebra isomorphism then, by Skolem-Noether's Theorem, there exists \( N \in \GL_n(\F_{q^s}) \) such that
\[
\M_{R_F}'(a) = N \M_{R_F}(a) N^{-1}
\]
for all \( a \in R_F \). Therefore,
\begin{equation} \label{eq:invariance} \rk(\M_{R_F}'(a)) = \rk(\M_{R_F}(a)). \end{equation}

Throughout, we often implicitly identify an element $a \in R_F$ with its corresponding matrix in $M_n(E_F)$. Accordingly, we refer to $\ker(a)$, $\Ima(a)$, and $\rk(a)$ to indicate the kernel, image, and rank of $\M_{R_F}(a)$ over $E_F \cong \F_{q^s}$. Indeed, as observed in \eqref{eq:invariance}, this rank is independent of the choice of $\F_{q^s}$-algebra isomorphism $\M_{R_F}$.

\section{Adjoint theory for skew polynomial framework} \label{sec:adjoint}

In this section, let $F(y) = F_0 + F_1 y + \cdots + F_{s-1} y^{s-1} + y^s$ be  a monic irreducible polynomial in $\F_q[y]$ of degree $s$ with \(F_0 \neq 0\). We provide a skew polynomial description of the transpose of matrices $M_n(\F_{q^s})$, when this matrix ring is identified with the quotient ring $R_F = R/RF(x^n)$.

Since the constant term of $F(y)$ is nonzero, we have that $\gcrd(F(x^n), x) = 1$ in $R$. Therefore, by Bezout identity, $x + R F(x^n)$ is a unit in the finite-dimensional \(\F_q\)--algebra $R_F$. Indeed, given \(u, v \in R\) such that \(1 = ux + vF(x^n)\), the inverse of \(x + RF(x^n)\) is \(u + RF(x^n)\). In this way, for every \(\alpha \in \F_{q^n}\), we have
\[
\sigma^{-1}(\alpha) = ux\sigma^{-1}(\alpha) + v F(x^n) \sigma^{-1}(\alpha) = u \alpha x + vF(x^n)\sigma^{-1}(\alpha).
\]
This implies that 
\begin{equation}\label{inverso}
(\sigma^{-1}(\alpha)+RF(x^n)) (u + RF(x^n)) = (u + RF(x^n))(\alpha + RF(x^n)),
\end{equation}
making it consistent to denote \(u + RF(x^n)\) by \(x^{-1} + RF(x^n)\). As a consequence, $x^i + R F(x^n)$ is also a unit in $R_F$ for every $i \geq 1$. In the next, we denote by $x^{-i}+RF(x^n)$ the inverse of $x^i+RF(x^n)$.

\begin{lemma} \label{lm:inverseofx}
    There exists a polynomial $z(x^n) \in Z(R)$ , with $\deg(z(x^n))<sn$, such that $z(x^n)x^{ns-i}+RF(x^n)$ is the inverse of $x^i+RF(x^n)$, for every $i \in \{1,\ldots,ns\}$.
\end{lemma}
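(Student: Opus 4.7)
The plan is to reduce the problem to inverting a single power of $y$ in the commutative quotient $\F_q[y]/(F(y))$, and then lift the inverse back to $Z(R)=\F_q[x^n]$. Observe that it suffices to exhibit a central element $z(x^n)\in Z(R)$ of degree less than $sn$ such that
\[
z(x^n)\,x^{ns}\equiv 1 \pmod{RF(x^n)}.
\]
Indeed, once such $z$ is found, for any $i\in\{1,\dots,ns\}$ the element $z(x^n)x^{ns-i}$ commutes with $x^i$ (because $z(x^n)$ is central and powers of $x$ commute with each other), and so
\[
x^i\bigl(z(x^n)x^{ns-i}\bigr) = \bigl(z(x^n)x^{ns-i}\bigr)x^i = z(x^n)x^{ns}\equiv 1 \pmod{RF(x^n)},
\]
proving that $z(x^n)x^{ns-i}+RF(x^n)$ is the inverse of $x^i+RF(x^n)$.

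To construct $z$, I would pass to the commutative setting. The natural inclusion $\F_q[y]\hookrightarrow Z(R)$ sending $y\mapsto x^n$ induces a ring isomorphism
\[
\F_q[y]/(F(y)) \;\xrightarrow{\;\sim\;}\; E_F,
\]
so it is enough to find $z(y)\in\F_q[y]$ of degree less than $s$ with $z(y)\,y^s\equiv 1\pmod{F(y)}$. Here I would use the hypothesis $F_0\neq 0$: it implies that $y\nmid F(y)$, and since $F(y)$ is irreducible in $\F_q[y]$ this yields $\gcd(y^s,F(y))=1$. By the (commutative) Bezout identity in $\F_q[y]$ there exist $z(y),w(y)\in\F_q[y]$ such that $z(y)y^s+w(y)F(y)=1$, and after reducing $z(y)$ modulo $F(y)$ we may assume $\deg(z)<s$.

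Lifting back via $y\mapsto x^n$ produces $z(x^n)\in\F_q[x^n]=Z(R)$ with $\deg_x(z(x^n))\le n(s-1)<ns$, and the identity $z(y)y^s-1\in(F(y))$ becomes $z(x^n)x^{ns}-1\in RF(x^n)$, which is exactly the congruence required above. Combined with the preliminary observation, this proves the lemma. I do not foresee any serious obstacle: the only subtlety is recognising that a \emph{single} $z$ suffices for all $i$ simultaneously, and this is forced by the centrality of $z(x^n)$, which converts the product $z(x^n)x^{ns-i}\cdot x^i$ into $z(x^n)x^{ns}$ independently of $i$.
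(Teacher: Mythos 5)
Your proof is correct and follows essentially the same route as the paper's: both arguments reduce the claim to inverting the central element $x^{ns}+RF(x^n)$ and then use the centrality of $z(x^n)$ to make a single $z$ work for every $i$. The only difference is that the paper abstractly invokes the existence of this inverse inside the center $E_F\cong\F_q[y]/(F(y))$ (represented by an element of $\F_q[x^n]$ of degree less than $ns$), whereas you construct it explicitly via the commutative Bezout identity from $\gcd(y^s,F(y))=1$; both steps are valid.
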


\begin{proof}
    Since \(x^{ns} + RF(x^n)\) is in the center of \(R_F\) so is its inverse. There exists an element $z(x^n)+RF(x^n) \in Z(R_F)$, with $\deg(z(x^n))<sn$, which is the inverse of $x^{ns}+RF(x^n)$. As a consequence, 
    \[
    z(x^n)x^{ns-i}x^i+RF(x^n)=1+RF(x^n),
    \]
    that proves the assertion.
\end{proof}

Our goal is to define a ring anti-isomorphism between $R_F$ and $R_{\hat{F}}$, where \(\hat{F}(y)\) is the {monic reciprocal polynomial} of $F(y)$, i.e.
\[
\hat{F}(y)=F_0^{-1}y^s F\left(\frac{1}{y}\right)=F_{0}^{-1}(1 + F_{s-1} y + \cdots + F_1 y^{s-1} + F_0 y^{s}).
\]
It is well known that $F(y) \in \F_q[y]$ is irreducible if and only if $\hat{F}(y)$ is irreducible. The main candidate for this mapping sends $x + R F(x^n)$ to the inverse of $x + R \hat{F}(x^n)$. To achieve this, we first recall a well known result that allows us to define a homomorphism between $R$ and a ring $S$, if we establish its action on $x$.

\begin{proposition} [see \textnormal{\cite[Proposition 2.4]{goodearl2004introduction}}] \label{prop:howtoextend}
Let $S$ be a ring, and assume that we have a ring homomorphism $\Phi:\F_{q^n} \rightarrow S$, and an element $x' \in S$ such that 
\begin{equation} \label{eq:relationtoextend}
    x'\Phi(\alpha)=\Phi(\sigma(\alpha))x',
\end{equation}
for every $\alpha \in \F_{q^n}$.
Then there is a unique ring homomorphism $\Psi: R \rightarrow S$ such that $\Psi |_{\fqn}=\Phi$ and $\Psi(x)=x'$. In particular, $\Psi$ is defined as
\begin{equation} \label{eq:extendedring}
\Psi: \sum_{i}a_ix^i \in R \longmapsto \sum_{i}\Phi(a_i)x'^i \in S
\end{equation}
\end{proposition}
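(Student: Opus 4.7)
The plan is to use the formula in \eqref{eq:extendedring} itself as the definition of $\Psi$ and then verify the three requirements: well-definedness, preservation of addition and identity, and preservation of multiplication. Uniqueness will come essentially for free. Indeed, any ring homomorphism $\Psi': R \to S$ restricting to $\Phi$ on $\F_{q^n}$ and sending $x$ to $x'$ must, by multiplicativity and additivity, send $\sum_i a_i x^i$ to $\sum_i \Phi(a_i)(x')^i$, which forces $\Psi' = \Psi$.

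First I would observe that every element of $R$ has a unique representation as a finite sum $\sum_i a_i x^i$ with $a_i \in \F_{q^n}$ (the statement that $\{x^i\}_{i \geq 0}$ is a left $\F_{q^n}$-basis of $R$), so the formula in \eqref{eq:extendedring} produces a well-defined $\F_q$-linear map, and clearly $\Psi(1_R) = \Phi(1) = 1_S$. The main step is multiplicativity. Using additivity and distributivity, it suffices to show that for all $a, b \in \F_{q^n}$ and $i,j \geq 0$,
\[
\Psi\bigl((a x^i)(b x^j)\bigr) = \Psi(a x^i)\,\Psi(b x^j).
\]
In $R$, iterating $x\alpha = \sigma(\alpha)x$ yields $(a x^i)(b x^j) = a\sigma^i(b)\, x^{i+j}$, so the left-hand side becomes $\Phi(a)\Phi(\sigma^i(b))(x')^{i+j}$, since $\Phi$ is a ring homomorphism. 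The right-hand side is $\Phi(a)(x')^i\Phi(b)(x')^j$. Multiplicativity therefore reduces to the identity
\[
(x')^i\Phi(b) = \Phi(\sigma^i(b))(x')^i
\]
for all $b \in \F_{q^n}$ and $i \geq 0$, which I would prove by induction on $i$: the case $i = 0$ is trivial, the case $i = 1$ is exactly the hypothesis \eqref{eq:relationtoextend}, and the inductive step follows by writing $(x')^{i}\Phi(b) = x'\bigl((x')^{i-1}\Phi(b)\bigr)$ and applying the inductive hypothesis followed by the base case to $\sigma^{i-1}(b)$.

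The argument presents no serious obstacle: the hypothesis \eqref{eq:relationtoextend} is designed exactly so that the defining commutation relation of the skew polynomial ring lifts through the candidate map, and all remaining checks are routine once the iterated commutation identity is in hand. This proposition is the universal property of $\F_{q^n}[x;\sigma]$ among rings equipped with a compatible $\F_{q^n}$-action and a suitably twisted element, and it reduces in the end to a direct verification on monomials.
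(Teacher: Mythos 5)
Your proof is correct and is precisely the standard verification of the universal property of $\F_{q^n}[x;\sigma]$: the paper does not prove this proposition but cites \cite[Proposition 2.4]{goodearl2004introduction}, whose argument is the same reduction to monomials via the iterated relation $(x')^i\Phi(b)=\Phi(\sigma^i(b))(x')^i$. Nothing is missing; the well-definedness, multiplicativity, and uniqueness steps are all handled as in the cited source.
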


For a ring $S$, we denote by $S^{op}$ the opposite ring of $S$. We first determine a correspondence between $R$ and $\left(R_{\hat{F}}\right)^{op}$.

\begin{lemma} \label{lm:RinRFopp}
The map 
    \[
    \Psi: \sum_ia_ix^i \in R \longmapsto \sum_i \sigma^{-i}(a_i)x^{-i} + R\hat{F}(x^n) \in \left(R_{\hat{F}}\right)^{op}
    \]
    is a surjective ring homomorphism from $R$ onto $\left(R_{\hat{F}}\right)^{op}$.
\end{lemma}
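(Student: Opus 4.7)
The plan is to build $\Psi$ via the universal property \Cref{prop:howtoextend} applied to the target ring $S = (R_{\hat{F}})^{op}$, with base map $\Phi:\F_{q^n}\to S$ given by $\Phi(\alpha) = \alpha + R\hat{F}(x^n)$ and distinguished element $x' = x^{-1}+R\hat{F}(x^n)$, whose existence in $R_{\hat{F}}$ was established in the discussion preceding \Cref{lm:inverseofx}. Once $\Psi$ is produced, I would match the resulting universal formula with the expression in the statement and finally check surjectivity.

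Since $\F_{q^n}$ is commutative, its image under $\Phi$ lies in a commutative subring of $R_{\hat{F}}$, so $\Phi$ is simultaneously a ring homomorphism into $R_{\hat{F}}$ and into its opposite. To verify the twist relation \eqref{eq:relationtoextend}, read inside the opposite ring it amounts to the identity $\alpha x^{-1} = x^{-1}\sigma(\alpha)$ inside $R_{\hat{F}}$, which follows at once from $x\alpha = \sigma(\alpha)x$ upon multiplication by $x^{-1}$ on both sides. Iterating the same identity yields $x^{-i}\beta = \sigma^{-i}(\beta)x^{-i}$ for all $i \ge 0$ and $\beta \in \F_{q^n}$.

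With $\Psi$ now in hand, the universal formula \eqref{eq:extendedring} gives $\Psi\bigl(\sum_i a_i x^i\bigr) = \sum_i \Phi(a_i)(x')^{i}$ with multiplication in $S$. Translating each product back to $R_{\hat{F}}$ reverses the factors, so $\Phi(a_i)(x')^{i}$ becomes $x^{-i}a_i$, and the iterated twist relation converts this into $\sigma^{-i}(a_i)x^{-i}$, which is exactly the stated map.

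For surjectivity, the image of $\Psi$ contains every element $\beta x^{-i}+R\hat{F}(x^n)$ with $\beta \in \F_{q^n}$ and $i \ge 0$, and therefore the full left $\F_{q^n}$-linear span of $\{x^{-i}+R\hat{F}(x^n):i\ge 0\}$. Since $\sigma^{ns} = \mathrm{id}$, left multiplication by the central unit $x^{-ns}+R\hat{F}(x^n)$ is an $\F_{q^n}$-linear automorphism of $R_{\hat{F}}$ that sends the standard basis $\{x^i+R\hat{F}(x^n):0\le i<ns\}$ onto $\{x^{-j}+R\hat{F}(x^n):1\le j\le ns\}$; the latter is therefore again an $\F_{q^n}$-basis of $R_{\hat{F}}$ and lies inside the image, forcing $\Psi$ to be surjective. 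I expect the main obstacle to be simply the bookkeeping around the opposite-ring convention in the first two steps; once the conventions are pinned down, everything reduces to elementary manipulations with the relation $x\alpha = \sigma(\alpha)x$ and the centrality of $x^{ns}$.
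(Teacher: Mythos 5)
Your proposal is correct and follows essentially the same route as the paper: both invoke \Cref{prop:howtoextend} with $\Phi$ the inclusion of $\F_{q^n}$ and $x' = x^{-1}+R\hat{F}(x^n)$, check the twist relation $\alpha x^{-1}=x^{-1}\sigma(\alpha)$ in $R_{\hat F}$, and translate the opposite-ring products into the stated formula. The only difference is that you spell out surjectivity (via centrality of $x^{ns}$ and the basis $\{x^{-j}\}$), where the paper simply asserts it is clear; your argument there is sound.
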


\begin{proof}
Let $\cdot$ denote the multiplication in $\left(R_{\hat{F}}\right)^{op}$. We get from \eqref{inverso} that, for any \(\alpha \in \mathbb{F}_{q^n}\), 
    \[
    \begin{split}
    (x^{-1}+R\hat{F}(x^n)) \cdot (\alpha + R\hat{F}(x^n)) &= (\alpha + R\hat{F}(x^n))(x^{-1}+R\hat{F}(x^n)) \\
    & = (x^{-1}+R\hat{F}(x^n))(\sigma(\alpha) + R\hat{F}(x^n))\\
    & = (\sigma(\alpha) + R\hat{F}(x^n)) \cdot (x^{-1}+R\hat{F}(x^n))
    \end{split}
    \]
    Thus, by taking $\Phi$ as the canonical inclusion map \(\mathbb{F}_{q^n} \to R_{\hat{F}}\), we get that the equation \eqref{eq:relationtoextend} is satisfied in $\left( R_{\hat{F}} \right)^{op}$ for $x'=x^{-1}+R\hat{F}(x^n)$. As a consequence, by Proposition \ref{prop:howtoextend}, we obtain that there exists a unique ring homomorphism $\Psi$ between $R$ and $\left( R_{\hat{F}} \right)^{op}$, defined by
    \[
    \Psi: \sum_ia_ix^i \in R \longmapsto \sum_i a_i \cdot (x^{-i} +R\hat{F}(x^n))=\sum_i \sigma^{-i}(a_i)x^{-i} + R\hat{F}(x^n) \in \left(R_{\hat{F}}\right)^{op}.
    \]
    Clearly, $\Psi$ is surjective, and the assertion follows.
\end{proof}

By using the above result, we are able to extend \cite[Lemma 26]{gomez2019dual} from the linear case to the current setting. 

\begin{theorem} \label{th:antiwithreciprocal}
   The map \begin{equation} \label{eq:definitionanti}
    \Theta: \sum_{i=0}^{ns-1}a_ix^i+RF(x^n) \in R_F \longmapsto \sum_{i=0}^{ns-1} \sigma^{-i}(a_i)x^{-i} + R\hat{F}(x^n) \in R_{\hat{F}}
    \end{equation}
    is an $E_F$-algebra anti-isomorphism between $R_F$ and $R_{\hat{F}}$.
\end{theorem}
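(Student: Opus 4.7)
The plan is to factor the surjective ring homomorphism $\Psi: R \to (R_{\hat{F}})^{op}$ of Lemma \ref{lm:RinRFopp} through the quotient $R/RF(x^n)$. Since $\ker\Psi$ is a two-sided ideal of $R$, the containment $RF(x^n) \subseteq \ker\Psi$ reduces to checking that $\Psi(F(x^n)) = 0$ in $R_{\hat{F}}$. Once this is in hand, the induced map is automatically a ring anti-homomorphism $\Theta: R_F \to R_{\hat{F}}$, and its explicit formula \eqref{eq:definitionanti} is obtained simply by applying $\Psi$ to the unique degree-$<ns$ representative of each class in $R_F$.

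The key computation, and what I expect to be the main obstacle, is verifying $\Psi(F(x^n)) = 0$ in $R_{\hat{F}}$. Since $F_i \in \F_q$ and $\sigma^n = \id$, applying $\Psi$ coefficient by coefficient produces
\[
\Psi(F(x^n)) = F_0 + F_1 x^{-n} + \cdots + F_{s-1} x^{-n(s-1)} + x^{-ns} + R\hat{F}(x^n),
\]
where the powers $x^{-i} + R\hat{F}(x^n)$ are the central units produced by Lemma \ref{lm:inverseofx}. The approach is to start from the relation $\hat{F}(x^n) \equiv 0 \pmod{R\hat{F}(x^n)}$, multiply through by $F_0$ to clear the leading constant, and then multiply the resulting congruence on the right by the central unit $x^{-ns} + R\hat{F}(x^n)$; this reverses the order of the coefficients and lands precisely on the expression above. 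This step is the algebraic reason why the reciprocal polynomial $\hat{F}$, rather than any other transform of $F$, is the correct target for the anti-isomorphism.

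It remains to upgrade $\Theta$ from a ring anti-homomorphism to an $E_F$-algebra anti-isomorphism. For bijectivity, since $R_F \cong M_n(\F_{q^s})$ is simple by \eqref{eq:artin}, $\ker\Theta$ is either $0$ or all of $R_F$, and the latter is ruled out by $\Theta(1) = 1$; equality of $\F_q$-dimensions $n^2 s$ on both sides then upgrades injectivity to bijectivity. For the $E_F$-algebra structure, any ring anti-isomorphism restricts to a ring isomorphism on centers, so $\Theta$ induces a field isomorphism $E_F \cong E_{\hat{F}}$; through this identification $R_{\hat{F}}$ acquires a canonical $E_F$-algebra structure, and the required $E_F$-linearity $\Theta(za) = \Theta(z)\Theta(a)$ for $z \in E_F$ follows immediately from the anti-multiplicativity of $\Theta$ combined with the centrality of $z$.
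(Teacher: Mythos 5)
Your proposal is correct and follows essentially the same route as the paper: factor the anti-homomorphism $\Psi$ of Lemma \ref{lm:RinRFopp} through $R/RF(x^n)$ via the key identity $\Psi(F(x^n)) = F_0\hat{F}(x^n)x^{-ns} + R\hat{F}(x^n) = 0$, then upgrade to an $E_F$-algebra anti-isomorphism. The only (harmless) variations are that you deduce injectivity from the simplicity of $R_F$ plus a dimension count, where the paper identifies $\ker\Psi' = RF(x^n)$ by a degree argument, and that you spell out the $E_F$-structure on $R_{\hat F}$ more explicitly than the paper's ``easy to check'' remark.
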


\begin{proof}
    By using \Cref{lm:RinRFopp}, we obtain that the map
\[
    \Psi': \sum_ia_ix^i \in R \longmapsto \sum_i \sigma^{-i}(a_i)x^{-i} + R\hat{F}(x^n) \in R_{\hat{F}},
    \]
    is an anti-homomorphism of rings. We now compute the kernel of $\Psi'$, which is a twosided ideal of $R$. First, note that, since $F_0 \neq 0$, we have $\gcrd(\hat{F}(x^n), x) = 1$ in $R$. By \Cref{lm:inverseofx}, there exists an element $z(x^n) \in Z(R)$ , with $\deg(z(x^n))<sn$, such that $z(x^n)x^{ns}+R\hat{F}(x^n)$ is the identity in $R_{\hat{F}}$, and $z(x^n)x^{ns-i}+R\hat{F}(x^n)$ is the inverse of $x^i+R\hat{F}(x^n)$, for every $i \in \{1,\ldots,ns-1\}$. So, we have
    \[
    \begin{split}
         \Psi'(F(x^n))& =F_0+F_1x^{-n}+\cdots+F_{s-1}x^{-n(s-1)}+x^{-ns} +R\hat{F}(x^n) \\
         & = z(x^n)(F_0x^{ns}+F_1x^{n(s-1)}+\cdots+F_{s-1}x^{n}+F_s) +R\hat{F}(x^n) \\
         & = z(x^n) F_0 (F_0^{-1}(F_0x^{ns}+F_1x^{n(s-1)}+\cdots+F_{s-1}x^{n}+F_s)) +R\hat{F}(x^n) \\
         & = z(x^n) F_0 \hat{F}(x^n) +R\hat{F}(x^n) \\
         & =0+R\hat{F}(x^n).
    \end{split}
    \]
    Therefore, $\Psi'(F(x^n)) = 0 + R \hat{F}(x^n)$, implying that $R F(x^n)$ is contained in the kernel of $\Psi'$. By a standard degree argument, we obtain that $R F(x^n) = \ker(\Psi')$. Thus, $\Psi'$ induces the ring anti-isomorphism $\Theta$ between $R_F = R / R F(x^n)$ and $R_{\hat{F}}$ as defined in \eqref{eq:definitionanti}. Finally, it is easy to check that $\Theta$ is also an $E_F$-linear map, which proves our assertion.
\end{proof}

Next proposition describes the inverse of $\Theta$. 

\begin{proposition} \label{prop:inverseTheta}
Let $\Theta$ be as in \eqref{eq:definitionanti}. Then $\Theta^{-1}$ is the map
     \begin{equation} \label{eq:inversetheta}
     \sum_{i=0}^{ns-1}a_ix^i+R\hat{F}(x^n) \in R_{\hat{F}} \longmapsto \sum_{i=0}^{ns-1} \sigma^{-i}(a_i)x^{-i} + RF(x^n) \in R_{F}.
     \end{equation}
\end{proposition}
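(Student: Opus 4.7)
The plan is to avoid computing the composition on a generic canonical representative (which is painful because $\Theta$ does not preserve the canonical form) and instead to invoke \Cref{th:antiwithreciprocal} a second time, applied now to the polynomial $\hat{F}$ in place of $F$.

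First I would check that $\hat{F}$ satisfies the hypotheses of \Cref{th:antiwithreciprocal}, namely that $\hat{F}\in\F_q[y]$ is monic irreducible with nonzero constant term: monicity and the value of $\hat{F}(0)=F_0^{-1}\neq 0$ are immediate from the definition, and irreducibility of $\hat{F}$ is equivalent to that of $F$. The essential small computation is $\hat{\hat{F}}=F$, which I would verify by expanding:
\[
\hat{\hat{F}}(y)=\hat F(0)^{-1}y^s\hat F(1/y)=F_0\cdot y^s\!\left(F_0^{-1}\sum_{i=0}^s F_i y^{-(s-i)}\right)=\sum_{i=0}^{s}F_i y^{i}=F(y).
\]
With this identity, \Cref{th:antiwithreciprocal} applied to $\hat{F}$ produces an $E_{\hat F}$-algebra anti-isomorphism
\[
\Theta':R_{\hat F}\longrightarrow R_{\hat{\hat F}}=R_{F},\qquad
\sum_{i=0}^{ns-1}a_ix^i+R\hat F(x^n)\longmapsto \sum_{i=0}^{ns-1}\sigma^{-i}(a_i)x^{-i}+RF(x^n),
\]
which is exactly the map in \eqref{eq:inversetheta}. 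So the candidate is already known to be an anti-isomorphism; only the identity $\Theta'\circ\Theta=\id_{R_F}$ remains.

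The composition $\Theta'\circ\Theta$ is an $\F_q$-algebra automorphism of $R_F$ (the composition of two anti-isomorphisms). Since $R_F$ is generated as an $\F_q$-algebra by the images of $\F_{q^n}$ and of $x$, it suffices to check the identity on these generators. For $\alpha\in\F_{q^n}$, the defining formulas give $\Theta'(\Theta(\alpha+RF(x^n)))=\alpha+RF(x^n)$ immediately. For $x$, observe that $\Theta(x+RF(x^n))=x^{-1}+R\hat F(x^n)$, and that $\Theta'(x+R\hat F(x^n))=x^{-1}+RF(x^n)$. Since an anti-isomorphism of rings preserves inverses (from $\Theta'(ab)=\Theta'(b)\Theta'(a)$ one gets $\Theta'(a^{-1})=\Theta'(a)^{-1}$), it follows that
\[
\Theta'\bigl(x^{-1}+R\hat F(x^n)\bigr)=\Theta'\bigl(x+R\hat F(x^n)\bigr)^{-1}=\bigl(x^{-1}+RF(x^n)\bigr)^{-1}=x+RF(x^n),
\]
which completes the check on the class of $x$.

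I do not expect any serious obstacle: the only nontrivial point is the coefficient computation showing $\hat{\hat F}=F$, which is needed to identify the target of the second anti-isomorphism with $R_F$ on the nose rather than just up to isomorphism. Once that equality is recorded, the rest is a universal argument on the two generators of $R_F$, using that anti-isomorphisms invert inverses. Alternatively one could argue by symmetry of the formula in $(F,\hat F)$, but presenting it via \Cref{th:antiwithreciprocal} applied to $\hat F$ keeps the proof short and makes the anti-isomorphism property of $\Theta'$ free of charge.
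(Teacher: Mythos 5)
Your proof is correct and takes essentially the same route as the paper: both apply \Cref{th:antiwithreciprocal} to $\hat F$ to see that the candidate map is an anti-isomorphism, and then verify that the composite acts as the identity on the generators $\F_{q^n}$ and $x$, using that anti-isomorphisms preserve inverses. Your explicit verification that $\hat{\hat F}=F$ is a detail the paper leaves implicit, not a different argument.
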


\begin{proof}
    Let $\overline{\Theta}$ denote the map defined as in \eqref{eq:inversetheta}, which is an \(\F_{q^s}\)-algebra anti-isomorphism by virtue of Theorem  \ref{th:antiwithreciprocal} applied to \(\hat{F}\). Since we know that \(\Theta\) is bijective, we only need to prove that $\overline{\Theta} \circ \Theta$ acts as the identity map to obtain $\overline{\Theta} = \Theta^{-1}$. Observe that this is an \(\F_{q^s}\)-algebra isomorphism, so we just need to show that it acts as the identity on a set of generators of the \(\F_{q^s}\)--algebra \(R_F\). If \(a \in \F_{q^n}\), then
    \[
\overline{\Theta}\Theta (a + RF(x^n)) = \overline{\Theta}(a + R\hat{F}(x^n)) =a + RF(x^n)
\]
and
\[
\begin{split}
\overline{\Theta}\Theta (x + RF(x^n)) &= \overline{\Theta}(x^{-1} + R\hat{F}(x^n)) = \overline{\Theta}(x + R\hat{F}(x^n))^{-1} \\
&= (x^{-1} + RF(x^n))^{-1} = x + RF(x^n), 
\end{split}
\]
which proves the assertion.
\end{proof}

With $T(y)=y-1$, a fundamental role in $\frac{R}{RT(x^n)} \cong M_n(\F_{q})$ is played by the \emph{adjoint} of a element, see \cite[pag. 480]{sheekey2016new}. Indeed, it provides the analogue of the transpose in $M_n(\F_q)$. More precisely, for an element $a=\sum_{i=0}^{n-1}a_ix^i+RT(x^n) \in R_T$, its \emph{adjoint} is defined to be the element \[\sum_{i=0}^{n-1}\sigma^{-i}(a_i)x^{-i}+RT(x^n)=\sum_{i=0}^{n-1}\sigma^{n-i}(a_i)x^{n-i}+RT(x^n) \in R_T\]

By using the anti-isomorphism $\Theta$ provided in \Cref{th:antiwithreciprocal}, we can extend this notion in the ring $R_F$.

\begin{definition} \label{def:adjoint}
  The \textbf{adjoint element} of  $a = \sum_{i=0}^{ns-1}a_ix^i + RF(x^n) \in R_F$ is 
    \[
    \Theta(a)=\sum_{i=0}^{ns-1} \sigma^{-i}(a_i)x^{-i} + R\hat{F}(x^{n}) \in R_{\hat{F}}.
    \]
\end{definition}

We observe that if $z(x^n) \in Z(R)$ is as in Lemma \ref{lm:inverseofx}, with $G(y)=\hat{F}(y)$, we have that 

\begin{equation} \label{eq:descryptionadjoint}
\Theta(a)=\sum_{i=0}^{ns-1} \sigma^{-i}(a_i)x^{-i} + R\hat{F}(x^n)=z(x^n)\sum_{i=0}^{ns-1} \sigma^{ns-i}(a_i)x^{ns-i} + R\hat{F}(x^n).
\end{equation}

Note that $R_F$ and $R_{\hat{F}}$ are both isomorphic to the matrix ring $M_n(\F_{q^s})$. We prove that the notion of adjoint given in Definition \ref{def:adjoint} is consistent with the usual notion of the adjoint of an element in $R_T \cong M_n(\F_q)$. 

Let \(\M_{R_F} : R_F \longrightarrow M_n(\F_{q^s})\)
be an $\F_{q^s}$-algebra isomorphism. We will show that the transpose of the matrix $\M_{R_F}(a) \in M_n(\F_{q^s})$ coincides with $\mathrm{M}_{R_{\hat{F}}}'(\Theta(a))$, for some $\F_{q^s}$-algebra isomorphism $\M_{R_{\hat{F}}}':R_{\hat{F}} \rightarrow M_n(\F_{q^s})$. To this aim, let 
\[\M_{R_{\hat{F}}}:R_{\hat{F}} \longrightarrow M_n(\F_{q^s})\]
be an $\F_{q^s}$-algebra isomorphism. We first observe that the anti-isomorphism $\Theta$ defined in Theorem \ref{th:antiwithreciprocal} allows us to define an $\F_{q^s}$-algebra anti-automorphism of $M_n(\F_{q^s})$: \[ \M_{R_{\hat{F}}} \Theta \M_{R_F}^{-1} : M_n(\F_{q^s}) \to M_n(\F_{q^s}) .\]

For a matrix \(A\), the notation \(A^\top\) stands for the transpose of \(A\).

\begin{theorem} \label{th:transposequotient}
There exists an $\F_{q^s}$-algebra isomorphism $\M_{R_{\hat{F}}}':R_{\hat{F}} \rightarrow M_n(\F_{q^s})$ such that 
\[
\M_{R_F}(a)^{\top}= \M_{R_{\hat{F}}}'(\Theta(a)),
\]
for all $a \in R_F$.
\end{theorem}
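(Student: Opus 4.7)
The plan is to exploit the fact that the transpose on $M_n(\F_{q^s})$ is itself an $\F_{q^s}$-algebra anti-automorphism, and to absorb the discrepancy between the two given anti-structures into an inner automorphism via the Skolem--Noether theorem. The target isomorphism $\M_{R_{\hat{F}}}'$ will then be produced by post-composing the chosen $\M_{R_{\hat{F}}}$ with a suitable conjugation.

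First, I would set
\[
\Phi \;=\; \M_{R_{\hat{F}}} \circ \Theta \circ \M_{R_F}^{-1} \;:\; M_n(\F_{q^s}) \longrightarrow M_n(\F_{q^s}).
\]
Since by \Cref{th:antiwithreciprocal} the map $\Theta$ is an $\F_{q^s}$-algebra anti-isomorphism, and $\M_{R_F}$, $\M_{R_{\hat{F}}}$ are $\F_{q^s}$-algebra isomorphisms, $\Phi$ is an $\F_{q^s}$-algebra anti-automorphism of $M_n(\F_{q^s})$. The transpose $\tau : B \mapsto B^\top$ is also an $\F_{q^s}$-linear anti-automorphism of $M_n(\F_{q^s})$, so the composition $\tau \circ \Phi$ (anti composed with anti) is an honest $\F_{q^s}$-algebra automorphism of $M_n(\F_{q^s})$.

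Next, I would invoke the Skolem--Noether theorem (already used implicitly in \eqref{eq:invariance}): every $\F_{q^s}$-algebra automorphism of the central simple algebra $M_n(\F_{q^s})$ is inner. Hence there exists $N \in \GL_n(\F_{q^s})$ such that $(\tau \circ \Phi)(B) = N B N^{-1}$ for every $B \in M_n(\F_{q^s})$. Rewriting this identity, for every $B$ one has
\[
\Phi(B) \;=\; \bigl( N B N^{-1}\bigr)^\top \;=\; (N^{-1})^\top\, B^\top\, N^\top.
\]
Setting $M = N^\top \in \GL_n(\F_{q^s})$, this becomes $\Phi(B) = M^{-1} B^\top M$, equivalently $B^\top = M\,\Phi(B)\,M^{-1}$.

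Finally, I would define
\[
\M_{R_{\hat{F}}}'\,:\, R_{\hat{F}} \longrightarrow M_n(\F_{q^s}), \qquad \M_{R_{\hat{F}}}'(c) \;=\; M\,\M_{R_{\hat{F}}}(c)\,M^{-1},
\]
which is an $\F_{q^s}$-algebra isomorphism as the composition of $\M_{R_{\hat{F}}}$ with an inner automorphism of $M_n(\F_{q^s})$. For any $a \in R_F$, unwinding $\Phi = \M_{R_{\hat{F}}} \circ \Theta \circ \M_{R_F}^{-1}$ with $B = \M_{R_F}(a)$ yields
\[
\M_{R_F}(a)^\top \;=\; M\, \Phi(\M_{R_F}(a))\, M^{-1} \;=\; M\, \M_{R_{\hat{F}}}(\Theta(a))\, M^{-1} \;=\; \M_{R_{\hat{F}}}'(\Theta(a)),
\]
which is the required identity. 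The only nontrivial ingredient is Skolem--Noether; the rest is bookkeeping. The main conceptual step is recognising that the freedom to replace $\M_{R_{\hat F}}$ by an inner-automorphism twist is exactly what is needed to turn an abstract anti-isomorphism into matrix transposition, and that tracking the transpose through the conjugation $B \mapsto NBN^{-1}$ reshuffles the conjugator to $N^\top$ on the other side.
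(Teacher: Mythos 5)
Your proposal is correct and follows essentially the same route as the paper: form the anti-automorphism $\M_{R_{\hat F}}\circ\Theta\circ\M_{R_F}^{-1}$ of $M_n(\F_{q^s})$, use Skolem--Noether to identify it with transposition up to conjugation by some invertible matrix, and absorb that conjugation into a modified isomorphism $\M_{R_{\hat F}}'$. The only difference is cosmetic — you derive the form $A\mapsto NA^{\top}N^{-1}$ of the anti-automorphism by first composing with $\tau$ and applying Skolem--Noether to the resulting automorphism, a step the paper states directly.
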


\begin{proof}
The map $\M_{R_{\hat{F}}} \Theta \M_{R_F}^{-1}$ is an anti-isomorphism of $M_n(\F_{q^s})$, so, as a consequence of Skolem-Noether Theorem, there exists a matrix $N \in \GL_n(\F_{q^s})$ such that 
\[\M_{R_{\hat{F}}}(\Theta(\M_{R_F}^{-1}(A)))=N A^{\top} N^{-1},\]
for every $A \in M_n(\F_{q^s})$. So, writing $A=\M_{R_F}(a)$, we get that
\[
N^{-1}\M_{R_{\hat{F}}}(\Theta(a))N=\M_{R_F}(a)^{\top},
\]
for every $a \in R_F$. Finally, by observing that
\[
M_{R_{\hat{F}}}': b \in R_{\hat{F}} \longmapsto N^{-1} M_{R_{\hat{F}}}(b) N \in M_n(\F_{q^s})
,\] 
is an $\F_{q^s}$-algebra isomorphism as well, we get the assertion.
\end{proof}

\section{Duality theory} \label{sec:duality}

The duality theory presented in this section is based, following the approach in \cite{gomez2020some}, on the notion of a Frobenius algebra.  A finite dimensional algebra \(A\) over a field \(K\) is said to be a \emph{Frobenius algebra} if there exists a non degenerate  bilinear form \(\langle -,-\rangle : A \times A \to K\) which is associative in the sense that \(\langle ab, c \rangle = \langle a, bc\rangle \) for all \(a,b,c \in A\). We say that such a bilinear form is a \emph{Frobenius bilinear form}. Alternatively, a Frobenius \(K\)--algebra may be defined by requiring that there is a linear form \(\varepsilon: A \to K\) whose kernel contains no nonzero right ideal. This linear form is known as a \emph{Frobenius functional} on \(A\). Frobenius bilinear forms and functionals are related by the equality \(\varepsilon (ab) = \langle a, b \rangle \), see e.g. \cite[Remark 2]{gomez2020some}.

For instance, any field \(K\) is a Frobenius algebra over every subfield \(k\), whenever the field extension \(K/k\) is finite. Any nonzero linear form \(\varepsilon : K \to k\) serves as a Frobenius functional. It is also well known that the full matrix ring \(M_n(K)\) is a Frobenius \(K\)--algebra with Frobenius functional \(\Tr : M_n(K) \to K \). From this, it is easily deduced that \(\varepsilon \Tr\) is a Frobenius functional for the \(k\)--algebra \(M_n(K)\).

From the foregoing discussion, and keeping the notation of the previous section, \(M_n(\F_{q^s})\) is a Frobenius algebra over \(\F_p\) with the Frobenius bilinear form 
\[
\langle -,- \rangle: M_n(\F_{q^s}) \times M_n(\F_{q^s}) \rightarrow \F_p 
\]
defined by 
\begin{equation} \label{eq:bilinearmatrix}
\langle A,B \rangle=\Tr_{q^s/p}(\Tr(AB)),
\end{equation}
for every $A,B \in M_n(\F_{q^s})$.

Another class of examples of Frobenius bilinear forms are defined on the \(\F_p\)--algebras \(R_F\) considered in previous sections. Given \(a, b \in R_F\), \((ab)_0\) stands for  term of degree \(0\) of the unique representative in \(R\) of degree less than \(ns\) of \(ab \in R_F\). The \(\F_p\)--algebra \(R_F\) is Frobenius according to the following theorem.

\begin{proposition} \label{th:bilinearformonRF}
    The \(\F_p\)--algebra \(R_F\) is Frobenius with bilinear Frobenius form 
    \[
    \langle -,- \rangle_F: R_F \times R_F \longrightarrow \F_p,\]
  defined by
    \begin{equation} \label{eq:bilinearoverRF}
    \langle a,b \rangle_F=\Tr_{q^n/p}\left((ab)_0\right)
     \end{equation}
\end{proposition}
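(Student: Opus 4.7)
The plan is to check, in turn, the three defining properties of a Frobenius bilinear form: $\F_p$-bilinearity, associativity, and non-degeneracy. The first is immediate: the projection $R_F \to \F_{q^n}$, $a \mapsto a_0$, sending a class to the degree-$0$ coefficient of its canonical representative of degree less than $ns$, is $\F_p$-linear, and $\Tr_{q^n/p}$ is $\F_p$-linear, so the composition is $\F_p$-bilinear in each slot. Associativity follows at once from associativity of multiplication in $R_F$: $\langle ab, c \rangle_F = \Tr_{q^n/p}(((ab)c)_0) = \Tr_{q^n/p}((a(bc))_0) = \langle a, bc\rangle_F$.

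The substance lies in non-degeneracy, which I would prove by exhibiting an explicit test element. Fix $a \in R_F$ with $a \neq 0$, write $a = \sum_{i=0}^{ns-1} a_i x^i + RF(x^n)$ in its canonical form, and let $k = \min\{i : a_i \neq 0\}$. Because $F(y)$ is irreducible and different from $y$, we have $F_0 \neq 0$, so $x$ is a unit in $R_F$ by the discussion preceding \Cref{lm:inverseofx}. The key computation is that
\[
a = \Bigl(\sum_{i=k}^{ns-1} a_i x^{i-k}\Bigr)\, x^k \quad \text{in } R_F,
\]
so multiplying on the right by $x^{-k}$ yields $a x^{-k} = \sum_{j=0}^{ns-1-k} a_{j+k} x^j$, a skew polynomial of degree strictly less than $ns$; hence this expression is already the canonical representative and no reduction modulo $F(x^n)$ is needed. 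For any $c \in \F_{q^n}$ we then obtain, using $x^j c = \sigma^j(c) x^j$,
\[
(a x^{-k} c)_0 = a_k\, c, \qquad \text{so} \qquad \langle a,\, x^{-k} c \rangle_F = \Tr_{q^n/p}(a_k c).
\]
Since $a_k \neq 0$ and $\Tr_{q^n/p}$ is non-degenerate on $\F_{q^n}/\F_p$, we may choose $c$ so that this scalar is nonzero. This proves that the left radical of $\langle -,- \rangle_F$ is trivial; in finite dimension this is equivalent to non-degeneracy on both sides, and \(R_F\) is therefore a Frobenius \(\F_p\)-algebra.

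The only mildly delicate point is the bookkeeping of canonical representatives in $R_F$: naively, computing $a \cdot b$ and projecting to the degree-$0$ coefficient requires reduction modulo $F(x^n)$, which is cumbersome. The factorization $a = \bigl(\sum_{i \geq k} a_i x^{i-k}\bigr) x^k$ bypasses this obstacle entirely, because after cancellation with $x^{-k}$ and right-multiplication by an element of $\F_{q^n}$ all exponents remain in the range $[0, ns-1]$, so the computation of $(ax^{-k} c)_0$ involves no reduction at all. This is the step where the argument really depends on $x$ being a unit in $R_F$.
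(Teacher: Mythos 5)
Your proof is correct, but it takes a genuinely different route from the paper's. The paper works with the Frobenius functional $\epsilon_F(a)=\Tr_{q^n/p}(a_0)$ and shows its kernel contains no nonzero one-sided ideal by invoking the ideal structure of $R_F$: a nonzero left ideal has the form $Rg/RF(x^n)$ for a divisor $g$ of $F(x^n)$, and $\epsilon_F$ vanishing on it forces $g_0=0$, hence $x\mid F(x^n)$ and $F_0=0$, a contradiction. You instead verify non-degeneracy directly by producing an explicit dual witness: for $a\neq 0$ with lowest nonzero coefficient $a_k$, the element $b=x^{-k}c$ satisfies $\langle a,b\rangle_F=\Tr_{q^n/p}(a_kc)$, which can be made nonzero. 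Both arguments hinge on $F_0\neq 0$ (in your case through the invertibility of $x$ in $R_F$, in the paper's through the divisibility contradiction), and your observation that the factorization $a=\bigl(\sum_{i\geq k}a_ix^{i-k}\bigr)x^k$ keeps all exponents below $ns$, so no reduction modulo $F(x^n)$ is needed, is exactly the point that makes the computation of $(ax^{-k}c)_0$ legitimate. Your approach is more self-contained (it does not use the classification of one-sided ideals of $R_F$) and has the added benefit of being constructive; the paper's is shorter given that machinery. The final step, that triviality of one radical implies non-degeneracy in finite dimension, is standard and correctly invoked.
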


\begin{proof}
Let \(\epsilon_F:R_F \to \F_p\) be the functional defined by \(\epsilon_F\left(\sum_{i=0}^{sn-1} g_i x^i\right) = \Tr_{q^n/p}(g_0)\), i.e. \(\langle a,b \rangle_F = \epsilon_F(ab)\). Hence \(\langle -,- \rangle_F\) is a Frobenius bilinear form if and only if \(\epsilon_F\) is a linear functional containing no nonzero left ideals. Linearity is clear. So let \(I \subseteq R_F\) be a left ideal such that \(\epsilon_F(I) = 0\). If \(I \neq 0\), then \(I = Rg/RF(x^n)\) for some proper left divisor \(g\) of \(F(x^n)\). Since \(\epsilon_F(g) = 0\), it follows \(g_0 = 0\). Therefore \(x\) left divides \(F(x^n)\) and \(F_0 = 0\) a contradiction. Consequently \(I = 0\) and \(\epsilon_F\) is a Frobenius functional. 
\end{proof}

Next, we relate the bilinear form defined as in \eqref{eq:bilinearmatrix} over $M_n(\F_{q^s})$ and the bilinear form as in \eqref{eq:bilinearoverRF} defined over $R_F$. 

\begin{theorem} \label{th:correspondencebilinear}
Let $\langle -,- \rangle$ be the bilinear form defined as in \eqref{eq:bilinearmatrix} over $M_n(\F_{q^s})$ and let $\langle -, -\rangle_F$ defined as in \eqref{eq:bilinearoverRF} over $R_F$.
Then there exists an invertible element $U \in \GL_n(\F_{q^s})$ such that 
\[
\langle M_{R_F}(a), M_{R_F}(b) U  \rangle= \langle a,b \rangle_F,
\]
for every $a,b \in R_F$.
\end{theorem}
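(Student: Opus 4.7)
The plan is to invoke the general theory of Frobenius algebras: any two Frobenius bilinear forms on a finite-dimensional Frobenius algebra differ only by right multiplication by a unit. Via the algebra isomorphism $\M_{R_F}$, the form $\langle -,-\rangle_F$ on $R_F$ transports to a second Frobenius bilinear form on $M_n(\F_{q^s})$, and this must coincide with $\langle -,-\rangle$ up to a unit twist by some $U \in \GL_n(\F_{q^s})$.

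First I would define the pulled-back functional $\tilde{\epsilon} := \epsilon_F \circ \M_{R_F}^{-1} : M_n(\F_{q^s}) \to \F_p$, where $\epsilon_F$ is the Frobenius functional on $R_F$ used in the proof of Proposition~\ref{th:bilinearformonRF}. Since $\M_{R_F}$ is an $\F_p$-algebra isomorphism and the Frobenius property is clearly preserved under algebra isomorphisms, $\tilde{\epsilon}$ is a Frobenius functional on the $\F_p$-algebra $M_n(\F_{q^s})$. Meanwhile, $\epsilon(A) := \Tr_{q^s/p}(\Tr(A))$ is the Frobenius functional associated to $\langle -,-\rangle$, as recalled at the start of this section.

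The central step is to apply the standard duality for Frobenius algebras (see e.g.\ \cite[Remark 2]{gomez2020some}): fixing the Frobenius functional $\epsilon$, the $\F_p$-linear map $M_n(\F_{q^s}) \to \mathrm{Hom}_{\F_p}(M_n(\F_{q^s}), \F_p)$ sending $U$ to the functional $A \mapsto \epsilon(AU)$ is an isomorphism, and it restricts to a bijection between the units of $M_n(\F_{q^s})$ and the Frobenius functionals. Applying this to $\tilde{\epsilon}$ yields a unique $U \in \GL_n(\F_{q^s})$ with $\tilde{\epsilon}(A) = \epsilon(AU)$ for every $A \in M_n(\F_{q^s})$. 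Substituting $A = \M_{R_F}(ab) = \M_{R_F}(a)\M_{R_F}(b)$ and unfolding the two definitions then produces the claimed identity.

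The main delicate point I anticipate is verifying the invertibility of $U$. The duality isomorphism a priori outputs only some $U \in M_n(\F_{q^s})$; to see that it must lie in $\GL_n(\F_{q^s})$, suppose $U$ were a zero-divisor. Then the left annihilator $\{A : AU = 0\}$ would be a nonzero left ideal contained in $\ker(\tilde{\epsilon})$, contradicting that $\tilde{\epsilon}$ is a Frobenius functional on the simple algebra $M_n(\F_{q^s})$, where in finite dimensions the one-sided versions of the Frobenius condition coincide.
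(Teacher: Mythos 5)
Your proof is correct and follows essentially the same route as the paper: both arguments transport one of the two Frobenius structures across the isomorphism $\M_{R_F}$ and then invoke the fact that two Frobenius forms (equivalently, functionals) on the same finite-dimensional Frobenius algebra differ by right multiplication by a unit. The only cosmetic differences are that you push $\langle-,-\rangle_F$ forward to $M_n(\F_{q^s})$ whereas the paper pulls $\langle-,-\rangle$ back to $R_F$, and that you sketch the unit--functional correspondence (including the invertibility of $U$) which the paper simply cites from Jans \cite[Th.~3.1]{jans1959frobenius}.
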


\begin{proof}
Since \(\M_{R_F} : R_F \rightarrow M_n(\F_{q^s}) \) is an \(\F_p\)--algebra isomorphism, we get that \([a,b] = \langle M_{R_F}(a), M_{R_F}(b)\rangle \) is a Frobenius bilinear form on \(R_F\). By \cite[Th. 3.1]{jans1959frobenius}, there is a unit \(u \in R_F\) such that
\(\langle a, b\rangle_F = [a,bu]\) for all \(a,b \in R_F\). Setting \(U = M_{R_F}(u)\) gives the desired equality. 
\end{proof}

Recall that, given a non degenerate  bilinear form \(\langle -,- \rangle\) on a finite dimensional vector space over a field \(K\), the map \(V \to V^*\) given by the assignment \(v \mapsto \langle -,v\rangle\) is an isomorphism of vector spaces. Here, \(V^*\) denotes vector space of all linear forms defined on \(V\). Given any vector subspace \(W\) of \(V\), we have an injective linear map
\[
\left( V/W\right)^* \to V^* \cong V,
\]
whose image is \[W^\perp = \{v \in V : \langle w,v \rangle = 0 \; \forall\;  w \in W\}.\] As a consequence, we get the well known dimension formula
\begin{equation} \label{eq:dimensionalformula}
\dim_K V = \dim_K W + \dim_K W^\perp.
\end{equation}

\section{Application on rank-metric codes}

The adjoint and duality theory for quotients of skew polynomial rings developed in sections \ref{sec:adjoint} and \ref{sec:duality}, allows us to extend the study of the recently introduced families of MRD codes from \cite{sheekey2020new} and \cite{lobillo2025quotients}. Specifically, we determine the adjoint and dual codes of these families. Additionally, we compute the \emph{idealisers, centralisers} and the \emph{centre} of the codes contained in these families for choices of the parameters that have not yet been addressed in the existing literature. These computations prove that these two families provide new examples of MRD codes for an extended set of infinite parameters. \\

We begin by recalling the essential notions and key results related to rank-metric codes relevant to our work. Let $\F$ be a finite field. A \emph{rank-metric code} is a subset $\C$ of the matrix space $M_{m \times n}(\F)$ endowed with the rank-distance metric:
\[
d(A,B) = \rk(A-B).
\]
The minimum distance $d(\C)$ of a code $\C $ is given by
\[
d(\C)=\min\{\rk(A-B) \colon A,B\in \C, A\ne B\}.
\]
For a subfield $\F'\leq \F$, a code $\C$ is said \emph{$\F'$-linear} if it is an $\F'$-subspace of $M_{m\times n}(\F)$. When $\F'$ is the prime subfield of $\F$, the code $\C$ is called additive. 

Any rank-metric code $\C$ of $M_{m\times n}(\F)$ satisfy the \emph{Singleton-like bound} \cite{delsarte1978bilinear}. Precisely, if $\C$ has a minimum distance $d$, then
\begin{equation} \label{eq:singleton}
|\C|\leq |\F|^{\max\{m,n\}(\min\{m,n\}-d+1)}.
\end{equation}
A code attaining this bound is known as a \emph{Maximum Rank Distance (MRD) code}.

In what follows, we will focus on the case $n=m$. Starting from a code $\C$, it is possible to define two further codes. 

\begin{definition}
    Let $\C$ be a rank-metric code in $M_n(\F)$. the \emph{adjoint code} of $\C$ is
\[ \C^\top =\{X^{\top} \colon X \in \C\} \subseteq M_n(\F). \]
 The \emph{dual code} of a rank-metric code $\C$ is
\[ \C^\perp = \{ Y \in M_n(\F) \colon \langle X,Y \rangle_{\rk}=0, \, \text{for all } \, X \in \C \} \subseteq M_n(\F), \]
where $\langle -,- \rangle_{\rk}$ denotes the bilinear form on $M_n(\F)$ defined by
\begin{equation} \label{eq:bilinearrank}
\langle X,Y \rangle_{\rk}= \mathrm{Tr}_{\F/\F'}\left(\mathrm{Tr}(XY^{\top})\right), 
\end{equation} 
where $\F'$ is the prime subfield of $\F$.
\end{definition}

Clearly, the adjoint of an MRD code is an MRD code, as well, and by using association schemes, Delsarte in \cite{delsarte1978bilinear} proves the dual of an MRD code is an MRD code. 

To distinguish rank-metric codes, we recall the notion of equivalence. For an automorphism $\rho$ of $\F$ and a matrix $A \in M_n(\F)$, by $A^{\rho}$ we denote the matrix obtained by applying $\rho$ to all its entries.

\begin{definition}
Two rank-metric codes $\C,\C' \subseteq M_n(\F)$ are equivalent if
\begin{equation} \label{eq:equivalencecode}
\C'=U \C^{\rho} V=\{UA^{\rho}V \colon A \in \C\},
\end{equation}
where $U,V \in \GL_n(\F)$, and $\rho$ is an automorphism of $\F$.
\end{definition}

As before, we assume that $F(y) \in \F_q[y]$ is a monic irreducible polynomial of degree $s$, with nonzero constant coefficient $F_0$.  According to the previous sections,  $R_F = R / RF(x^n)$ and $M_n(\F_{q^s})$ are isomorphic  $\F_{q^s}$-algebras via some isomorphism $\M_{R_F}$. Therefore, for any subset $C$ of $R_F$, we can consider its image $\M_{R_F}(C)$ in $M_n(\F_{q^s})$, which turns out to be a rank-metric code. 

We first prove that if a subset of \(R_F\) is represented using different $\F_{q^s}$-algebra isomorphisms, then the resulting rank-metric codes in $M_n(\F_{q^s})$ are equivalent.

\begin{lemma} \label{lm:indipendencerepresentation}
    Let $\M_{R_F},\M'_{R_F}:R_F \rightarrow M_n(\F_{q^s})$ be $\F_{q^s}$-algebra isomorphisms. And let $C$ be a subset of $R_F$. Then the rank-metric codes $\C_1=\M_{R_F}(C)$ and $\C_2=\M'_{R_F}(C)$ in $M_n(\F_{q^s})$ are equivalent. 
\end{lemma}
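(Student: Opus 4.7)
The plan is to invoke the Skolem--Noether theorem, exactly as the paper already did around equation~\eqref{eq:invariance}. The two maps $\M_{R_F}$ and $\M'_{R_F}$ are both $\F_{q^s}$-algebra isomorphisms between the central simple $\F_{q^s}$-algebras $R_F$ and $M_n(\F_{q^s})$, so their composition
\[
\M'_{R_F} \circ \M_{R_F}^{-1}: M_n(\F_{q^s}) \longrightarrow M_n(\F_{q^s})
\]
is an $\F_{q^s}$-algebra automorphism of $M_n(\F_{q^s})$. By Skolem--Noether, every such automorphism is inner, so there exists $N \in \GL_n(\F_{q^s})$ such that
\[
\M'_{R_F}(a) = N \M_{R_F}(a) N^{-1} \qquad \text{for all } a \in R_F.
\]

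Applying this identity to every element of $C$ gives
\[
\C_2 = \M'_{R_F}(C) = N \M_{R_F}(C) N^{-1} = N \C_1 N^{-1}.
\]
This is precisely the shape required by \eqref{eq:equivalencecode}, with $U = N$, $V = N^{-1}$ in $\GL_n(\F_{q^s})$ and $\rho = \id$. Therefore $\C_1$ and $\C_2$ are equivalent rank-metric codes.

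There is no real obstacle here; the statement is essentially a packaging of the Skolem--Noether argument that the paper has already used to justify \eqref{eq:invariance}. The only thing worth noting is that the resulting equivalence is realised by a $\GL_n(\F_{q^s})$-conjugation without any field automorphism, which is a stronger conclusion than the general notion of equivalence in Definition after~\eqref{eq:equivalencecode} requires.
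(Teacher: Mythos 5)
Your proof is correct and is essentially identical to the paper's own argument: both invoke Skolem--Noether to write $\M'_{R_F}(a) = N\M_{R_F}(a)N^{-1}$ and then conclude $\C_2 = N\C_1 N^{-1}$, which matches the equivalence in \eqref{eq:equivalencecode} with $\rho = \id$. No gaps.
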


\begin{proof}
By Skolem-Noether's Theorem, there exists $N \in \GL_n(\F_{q^s})$ such that
\[
\M_{R_F}'(a)=N\M_{R_F}(a) N^{-1}
\]
for any $a \in R_F$. As a consequence,
\[
\C_2=N \C_1 N^{-1},
\]
that proves the assertion. 
\end{proof}

Therefore, the representation of the rank-metric code in $M_n(\F_{q^s})$, as far as its equivalence class concerns, does not depend on the choice of the isomorphism between $R_F$ and $M_n(\F_{q^s})$.

Delsarte \cite{delsarte1978bilinear}, and later Gabidulin \cite{gabidulin1985theory}, proved the existence of MRD codes over every finite field and for all parameters. More precisely, they constructed $\F_q$-linear MRD codes in $M_n(\F_q)$ with size $q^{nk}$ and minimum distance $n-k+1$, for any $1< k<n$. These codes are often known as \emph{Gabidulin codes}. Later, the family of Gabidulin codes was extended by Sheekey to the family of \emph{twisted Gabidulin codes} and then by Lunardon, Trombetti and Zhou in \cite{lunardon2018generalized}. These families provide the same set of parameters as Gabidulin codes, but they are inequivalent to them (cf. \cite[Theorem 7]{sheekey2016new}). Another relevant family of MRD codes is defined by the Trombetti-Zhou codes \cite{trombetti2018new}, that are $\F_q$-linear MRD codes in $M_n(\F_q)$,
but requiring $q$ odd and $n$ even. In 2020, Sheekey's groundbreaking work \cite{sheekey2020new} introduced a large family of MRD codes by quotients of skew polynomials $R_F$. These codes include both Gabidulin and twisted Gabidulin codes. Let us record this result for later reference.

\begin{theorem} [see \textnormal{\cite[Theorem 7]{sheekey2020new}}]
Let $\rho \in \Aut(\fqn)$ and let $\K=\mathrm{Fix}(\rho) \cap \fq$. Let $1\leq k < n$ be a positive integer. Then the set 
\begin{equation} \label{eq:johncodes}
S_{n,s,k}(\eta,\rho,F)=\left\{a_0+\sum_{i=1}^{sk-1}a_ix^i+\eta \rho(a_0)x^{ks} +RF(x^n) \colon a_i \in \fqn  \right\} \subseteq R_F,
\end{equation}
defines a $\K$-linear MRD code \(\mathcal{C}\) in $M_n(\F_{q^s})$ with $\dim_{\K}(\C)=[\F_{q^n}:\K]sk$ and having minimum distance $n-k+1$, for any $\eta \in \fqn$ such that $\N_{\fqn/\K}(\eta) \N_{\fq/\K}((-1)^{sk(n-1)}F_0^k) \neq 1$.
\end{theorem}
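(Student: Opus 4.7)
The plan is to verify in turn three claims: $\K$-linearity of the defining set, its $\K$-dimension, and the minimum distance of the resulting code.  The first two are straightforward.  $\K$-linearity holds because, by the very definition of $\K = \mathrm{Fix}(\rho) \cap \fq$, the map $a_0 \mapsto \eta \rho(a_0)$ is $\K$-linear, so the shape $a_0 + \sum_{i=1}^{sk-1} a_i x^i + \eta \rho(a_0) x^{sk}$ is preserved under $\K$-linear combinations.  For the dimension, every representative has $x$-degree at most $sk$, strictly less than $ns = \deg F(x^n)$, hence the parametrization $(a_0, \ldots, a_{sk-1}) \in \F_{q^n}^{sk}$ is a $\K$-linear bijection onto $S_{n,s,k}(\eta,\rho,F)$; this gives $\dim_\K \C = sk \cdot [\F_{q^n} : \K]$ and $|\C| = q^{nsk}$.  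Plugging $|\C| = q^{nsk}$ into the Singleton-like bound \eqref{eq:singleton} yields $d(\C) \leq n - k + 1$.

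The substance of the proof is the matching lower bound: every nonzero $a \in S_{n,s,k}(\eta,\rho,F)$ must satisfy $\rk \M_{R_F}(a) \geq n - k + 1$, equivalently $\dim_{\F_{q^s}} \ker \M_{R_F}(a) \leq k - 1$.  I would argue by contradiction: assume a nonzero codeword with $\dim_{\F_{q^s}} \ker \M_{R_F}(a) \geq k$ and derive a norm identity that violates the hypothesis on $\eta$.  To make root counting effective in the quotient $R_F$ (and not just in the classical case $F(y) = y - 1$), the natural step is to extend scalars to $\F_{q^{ns}}$, over which $F(x^n)$ splits completely in the center of the extended skew polynomial ring; under this extension the kernel of $a$ corresponds to a linearized root space whose dimension is controlled by the skew-polynomial degree of $a$ in a manner analogous to the classical Gabidulin estimate.

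From $k$ independent kernel vectors $r_1, \ldots, r_k$ together with the coupling $a_{sk} = \eta \rho(a_0)$ one assembles a $(k+1) \times (k+1)$ Moore/Dickson-type determinant whose vanishing encodes the assumed kernel dimension bound.  Expanding this determinant along the two coupled coefficients $a_0$ and $a_{sk}$, and tracking how the irreducible factors of $F(x^n)$ contribute to the constant term, one expects to obtain precisely the identity $\N_{\fqn/\K}(\eta)\,\N_{\fq/\K}\bigl((-1)^{sk(n-1)} F_0^k\bigr) = 1$, contradicting the hypothesis.  The main obstacle, and the step I expect to be the most delicate, is exactly this last piece of bookkeeping: generalizing the Dickson-matrix argument from $F(y) = y - 1$ to an arbitrary irreducible $F$, and showing that the factor $(-1)^{sk(n-1)} F_0^k$ arises cleanly from the interplay between the central polynomial $F(x^n)$, the Frobenius twist $\rho$, and the polynomial length $sk$ after base extension to $\F_{q^{ns}}$.
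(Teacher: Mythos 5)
First, note that the paper does not prove this theorem: it is quoted verbatim from Sheekey's work (\cite[Theorem 7]{sheekey2020new}) and recorded without proof, so there is no internal argument to compare against. Your treatment of the routine parts is correct: $\K$-linearity follows from $\K\subseteq\mathrm{Fix}(\rho)$, the degree bound $sk<ns$ makes the parametrization by $(a_0,\dots,a_{sk-1})$ injective, hence $\dim_\K\C=[\F_{q^n}:\K]sk$ and $|\C|=q^{nsk}$, and the Singleton-like bound \eqref{eq:singleton} gives $d(\C)\le n-k+1$.

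The lower bound $d(\C)\ge n-k+1$, however, is not proved; it is only announced, and you yourself flag the decisive step as "the most delicate piece of bookkeeping" that you "expect" to work out. Two concrete ingredients are missing. First, you need the exact control of nullity in the quotient $R_F$: for nonzero $a\in R_F$ one has $\dim_{E_F}\ker(a)=\tfrac1s\deg\gcrd(a(x),F(x^n))$, so an element of degree at most $sk$ has nullity at most $k$, with equality forcing $a$ itself to be a right divisor of $F(x^n)$ of degree exactly $sk$. Your proposed substitute --- base extension to $\F_{q^{ns}}$ followed by a Moore/Dickson determinant --- is itself only a sketch: extending scalars changes both the coefficient field and the automorphism of the skew polynomial ring, and you do not explain how the kernel over $E_F\cong\F_{q^s}$ is recovered from linearized root spaces after this extension. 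Second, and more importantly, the hypothesis $\N_{\fqn/\K}(\eta)\,\N_{\fq/\K}\bigl((-1)^{sk(n-1)}F_0^k\bigr)\neq 1$ enters only through a characterization of which polynomials $a_{sk}x^{sk}+\cdots+a_0$ of degree $sk$ right-divide $F(x^n)$, namely a norm relation between $a_{sk}$ and $a_0$ involving $(-1)^{sk(n-1)}F_0^k$; this is exactly the identity you say "one expects to obtain," and without deriving it the coupling $a_{sk}=\eta\rho(a_0)$ cannot be shown to be incompatible with rank $n-k$. As it stands the proposal establishes the upper bound on the minimum distance but not the MRD property, which is the entire content of the theorem.
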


Similarly, by using the quotients of skew polynomials $R_F$, in \cite{lobillo2025quotients} a new large family of MRD codes has been constructed that properly contains the \emph{Trombetti-Zhou codes} \cite{trombetti2018new}. This family is defined according to the following theorem. 

\begin{theorem} [see \textnormal{\cite[Theorem 6.1.]{lobillo2025quotients}}]
Assume that $q$ is an odd prime power. Let $n=2t \geq 2$. For a positive integer $1\leq k<n$, the set
\begin{equation} \label{eq:finiteextensiontrombzhou}
D_{n,s,k}(\gamma,F)=\left\{ a_0'+\sum_{i=1}^{sk-1} a_i x^i + \gamma a_0'' x^{sk} +RF(x^n) \colon a_i \in \F_{q^n}, a_0',a_0'' \in \F_{q^t} \right\} \subseteq R_F,
\end{equation}
defines an $\F_q$-linear MRD code \(\mathcal{C}\) in $ M_{n}(\F_{q^s})$ with $\dim_{\F_q}(\C)=nsk$ and minimum distance $n-k+1$ for any $\gamma \in \F_{q^n}$ such that $(-1)^{ks}F_0^{k}\N_{\F_{q^n}/\F_q}(\gamma)$ is not a square in $\F_q$.
\end{theorem}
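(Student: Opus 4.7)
The plan is to verify dimension and linearity routinely, reduce the MRD property via the Singleton-like bound to a rank lower bound, and then prove that bound by a short case analysis that reduces to Sheekey's theorem. For the dimension, the coefficients $a_1, \dots, a_{sk-1}$ contribute $n(sk-1)$ to $\dim_{\fq}$, while each of $a_0', a_0'' \in \F_{q^t}$ contributes $t$, giving $\dim_{\fq} D_{n,s,k}(\gamma, F) = n(sk-1)+2t = nsk$, and $\fq$-linearity follows since $\fq \subseteq \F_{q^t}$. Because $|\mathcal{C}| = q^{nsk} = (q^s)^{nk}$ attains the Singleton-like bound \eqref{eq:singleton} precisely at minimum distance $d = n-k+1$, it suffices to show $\rk(a) \geq n-k+1$ for every nonzero $a \in D_{n,s,k}(\gamma, F)$.

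I would split into cases. If $a_0'' = 0$, then $a$ is a polynomial of degree at most $sk-1$ and lies in the Gabidulin-type MRD code $S_{n,s,k}(0, \sigma^t, F)$ (the case $\eta=0$ in Sheekey's theorem, where the norm condition is trivially satisfied), yielding $\rk(a) \geq n-k+1$. If $a_0' = 0$ and $a_0'' \neq 0$, one can factor $a = a'\cdot x$ with $a' = \sum_{j=0}^{sk-1} a_{j+1} x^j$ of exact degree $sk-1$; since $F_0 \neq 0$ makes $x$ a unit in $R_F$, $\rk(a) = \rk(a')$ and the previous case applies.

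The core case is $a_0', a_0'' \in \F_{q^t}^{\times}$. Setting $\eta = \gamma a_0''/a_0' \in \fqn^\times$ and using $\sigma^t(a_0') = a_0'$, the coefficient of $x^{sk}$ in $a$ equals $\eta\,\sigma^t(a_0') = \gamma a_0''$, so $a \in S_{n,s,k}(\eta, \sigma^t, F)$. Sheekey's theorem (with $\rho = \sigma^t$ and $\K = \fq$) now gives $\rk(a) \geq n-k+1$ provided $\N_{\fqn/\fq}(\eta)(-1)^{sk(n-1)}F_0^k \neq 1$. Since $n=2t$, $n-1$ is odd, so $(-1)^{sk(n-1)} = (-1)^{sk}$; and for any $\alpha \in \F_{q^t}^\times$, transitivity of norms together with $[\fqn:\F_{q^t}]=2$ yield
\[
\N_{\fqn/\fq}(\alpha) = \N_{\F_{q^t}/\fq}(\alpha^{2}) = \bigl(\N_{\F_{q^t}/\fq}(\alpha)\bigr)^{2},
\]
a square in $\fq^\times$. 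Therefore
\[
\N_{\fqn/\fq}(\eta)\cdot (-1)^{sk}F_0^{k} = \bigl((-1)^{sk}F_0^{k}\N_{\fqn/\fq}(\gamma)\bigr) \cdot \bigl(\N_{\F_{q^t}/\fq}(a_0''/a_0')\bigr)^{2},
\]
and the hypothesis that $(-1)^{ks}F_0^{k}\N_{\fqn/\fq}(\gamma)$ is \emph{not} a square in $\fq$ forces the right-hand side to be a non-square, hence distinct from $1$.

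The main obstacle is spotting the correct reduction: the element $a$ does not lie in any single Sheekey code with a fixed parameter, but instead varies over a union indexed by $\eta = \gamma a_0''/a_0'$ as $a_0', a_0''$ range over $\F_{q^t}^{\times}$, and one must verify that the hypothesis on $\gamma$ \emph{uniformly} excludes every such $\eta$ from the exceptional set of Sheekey's theorem. The key mechanism is that $[\fqn:\F_{q^t}]=2$ forces $\N_{\fqn/\fq}$ applied to an element of $\F_{q^t}$ to land in $(\fq^\times)^{2}$, which is precisely what converts Sheekey's norm-equation obstruction into a square-class obstruction, and explains the exact shape of the hypothesis on $\gamma$.
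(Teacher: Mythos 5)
The paper states this theorem without proof, quoting it from \cite[Theorem 6.1]{lobillo2025quotients}, so there is no internal proof to compare against; judged on its own, your argument is correct and complete. The case analysis on $(a_0',a_0'')$, the factorization through the unit $x+RF(x^n)$ when $a_0'=0$, and the key observation that $\N_{\F_{q^n}/\F_q}$ maps $\F_{q^t}^{\times}$ into $(\F_q^{\times})^2$ --- so the non-square hypothesis on $\gamma$ uniformly excludes every $\eta=\gamma a_0''/a_0'$ from the exceptional set of Sheekey's theorem --- is precisely the reduction used in the cited source, generalizing the original Trombetti--Zhou argument.
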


\begin{remark} \label{rK:MRDs1}
In the case where $s=1$ and $F(y) = y - 1$, the codes $S_{n,1,k}(\eta, \rho, F)$ correspond to (generalized) Gabidulin codes \cite{delsarte1978bilinear,gabidulin1985theory,Gabidulins} or twisted Gabidulin codes \cite{sheekey2016new,otal2016additive,lunardon2018generalized}, depending on whether $\eta = 0$ or not, respectively. Meanwhile, the codes $D_{n,1,k}(\gamma, F)$ are exactly the Trombetti-Zhou codes \cite{trombetti2018new}.
\end{remark}

\begin{remark}
It is worth noting that quotients of the skew polynomial ring \( R_F \) have also been studied in the context of cyclic Galois extensions \( \mathbb{L}/\mathbb{K} \), leading to new nonassociative division algebras and MRD codes over matrix spaces \( M_n(\mathbb{D}) \), where \( \mathbb{D} \) is a (non necessarily associative) division algebra; cf. \cite{sheekey2020new,thompson2023division,lobillo2025quotients}.   
\end{remark}

Other few families of MRD codes are known in the literature, but only for specific parameters. We summarize in Table \ref{tab:parameters} the known MRD codes with their respective references. We will not consider MRD codes in $M_n(\F)$ with minimum distance $n$, as they correspond to semifields and are beyond the scope of this paper.

The problem of determining the adjoint and dual codes of the families $S_{n,s,k}(\eta,\rho,F)$ and $D_{n,s,k}(\gamma,F)$ has not been addressed in literature. By making use of the tools developed in Sections \ref{sec:adjoint} and \ref{sec:duality}, we are able to solve this problem. Let us start by determining the adjoint codes of the families $S_{n,s,k}(\eta,\rho,F)$ and $D_{n,s,k}(\gamma,F)$.

\begin{proposition} \label{prop:determinantionadjoint}
Let $\hat{F}(y)$ be the monic reciprocal polynomial of the irreducible polynomial $F(y) \in \F_q[y]$. For any $1 \leq k<n$, the following hold. 

\begin{enumerate}[I)]
\item\label{adjointS} The adjoint code of $S_{n,s,k}(\eta,\rho,F) \subseteq R_F \cong M_n(\F_{q^s})$ is equivalent to
\[
S_{n,s,k}(\rho^{-1}(\eta^{-1}),\rho^{-1}\circ \sigma^{ks},\hat{F}) \subseteq R_{\hat{F}} \cong M_n(\F_{q^s}).
\]
\item\label{adjointD} The adjoint code of  $D_{n,s,k}(\gamma,F) \subseteq R_F \cong M_n(\F_{q^s})$ is equivalent to
 \[D_{n,s,k}\left(\sigma^{s(n-k)}\left(\frac{1}{\gamma}\right),\hat{F}\right)\subseteq R_{\hat{F}} \cong M_n(\F_{q^s}).
 \]
\end{enumerate}
\end{proposition}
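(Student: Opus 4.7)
The plan is to lift the problem from matrices to the skew-polynomial setting via the anti-isomorphism $\Theta\colon R_F \to R_{\hat{F}}$ of \Cref{th:antiwithreciprocal}. By \Cref{th:transposequotient} combined with \Cref{lm:indipendencerepresentation}, the adjoint of $\M_{R_F}(S)\subseteq M_n(\F_{q^s})$ is equivalent, as a rank-metric code, to $\M_{R_{\hat{F}}}(\Theta(S))$ for any choice of $\F_{q^s}$-algebra isomorphism $\M_{R_{\hat{F}}}\colon R_{\hat{F}}\to M_n(\F_{q^s})$. Since $\M_{R_{\hat{F}}}$ is an algebra homomorphism, left or right multiplication of $\Theta(S)$ by units in $R_{\hat{F}}$ translates into multiplication by invertible matrices, and therefore also produces equivalent rank-metric codes. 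The two kinds of units we shall exploit are nonzero scalars $\alpha\in\F_{q^n}^*$ and the element $x^{sk}$, which is a unit in $R_{\hat{F}}$ by \Cref{lm:inverseofx} because $\hat{F}_0\neq 0$.

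For part \ref{adjointS}, I would take a generic $a = a_0 + \sum_{i=1}^{sk-1} a_i x^i + \eta\rho(a_0)x^{sk} + RF(x^n) \in S_{n,s,k}(\eta,\rho,F)$, apply \Cref{def:adjoint}, and right-multiply by $x^{sk}$. Reindexing via $j = sk-i$ yields
\[
\Theta(a)\cdot x^{sk} = \sigma^{-sk}(\eta\rho(a_0)) + \sum_{j=1}^{sk-1} \sigma^{j-sk}(a_{sk-j})\,x^{j} + a_0\, x^{sk}.
\]
Setting $b_0=\sigma^{-sk}(\eta\rho(a_0))$ and $b_j=\sigma^{j-sk}(a_{sk-j})$, the $b_j$'s range independently over $\F_{q^n}$ as the $a_i$'s do. Solving for $a_0$, using that $\Gal(\F_{q^n}/\F_q)$ is abelian and $\rho$ is a field automorphism, gives $a_0 = \rho^{-1}(\eta^{-1})\cdot(\rho^{-1}\circ \sigma^{sk})(b_0)$, which is precisely the defining constraint of $S_{n,s,k}(\rho^{-1}(\eta^{-1}),\rho^{-1}\circ\sigma^{sk},\hat{F})$.

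Part \ref{adjointD} proceeds analogously. For $a = a_0' + \sum_{i=1}^{sk-1} a_i x^i + \gamma a_0''\, x^{sk}$ with $a_0',a_0''\in\F_{q^t}$, the same expansion gives
\[
\Theta(a)\cdot x^{sk} = \sigma^{-sk}(\gamma)\,\sigma^{-sk}(a_0'') + \sum_{j=1}^{sk-1}\sigma^{j-sk}(a_{sk-j})\,x^{j} + a_0'\,x^{sk}.
\]
Since $\F_{q^t}$ is the unique subfield of $\F_{q^n}$ of cardinality $q^t$, every power of $\sigma$ stabilises it; in particular $\sigma^{-sk}(a_0'')\in\F_{q^t}$. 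Left-multiplying by $\sigma^{-sk}(\gamma)^{-1}\in\F_{q^n}^*$ transfers the factor $\sigma^{-sk}(\gamma)$ from the constant term to the $x^{sk}$-coefficient as its inverse, producing an element with constant in $\F_{q^t}$, middle coefficients free in $\F_{q^n}$, and $x^{sk}$-coefficient $\sigma^{-sk}(\gamma)^{-1}a_0'$ with $a_0'\in\F_{q^t}$. Using $\sigma^n=\id$ on $\F_{q^n}$ to rewrite $\sigma^{-sk}=\sigma^{s(n-k)}$, the resulting set is exactly $D_{n,s,k}(\sigma^{s(n-k)}(1/\gamma),\hat{F})$.

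The main difficulty I expect is bookkeeping: carrying the $\sigma$-powers through the reindexing $i\leftrightarrow sk-j$ and pinning down the correct twist parameter. In part \ref{adjointS} one must verify that $a_0$ factors precisely as $\rho^{-1}(\eta^{-1})\cdot(\rho^{-1}\circ\sigma^{sk})(b_0)$ rather than some variant differing by a spurious $\sigma$-power or a misplaced $\rho^{-1}$; and in part \ref{adjointD} one must check that after the scalar adjustment both the constant term and the $x^{sk}$-coefficient are genuinely parametrised by independent elements of $\F_{q^t}$. Beyond these essentially mechanical verifications no conceptual obstacle should arise.
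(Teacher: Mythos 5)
Your proposal is correct and follows essentially the same route as the paper: pass to $R_{\hat F}$ via $\Theta$ using Theorem \ref{th:transposequotient} and Lemma \ref{lm:indipendencerepresentation}, normalise by units of $R_{\hat F}$, and identify the twist parameter from the relation between the constant and the $x^{sk}$-coefficient (your right-multiplication by $x^{sk}$ is the same normalisation as the paper's factorisation $z(x^n)(\cdots)x^{s(n-k)}$, since $z(x^n)x^{ns}=1$). The computations of $\eta'=\rho^{-1}(\eta^{-1})$, $\rho'=\rho^{-1}\circ\sigma^{sk}$ and $\gamma'=\sigma^{s(n-k)}(1/\gamma)$ check out, and your direct verification that the new coefficients range over the required sets replaces the paper's inclusion-plus-dimension-count argument with no loss.
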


\begin{proof}
As proved in \Cref{lm:indipendencerepresentation}, the image of a subset $\C$ of $R_F$ under different $\F_{q^s}$-algebra isomorphisms $R_F \cong M_n(\F_{q^s})$ gives equivalent codes.  So, let fix $M_{R_F}:R_F \rightarrow M_n(\F_{q^s})$ be an $\F_{q^s}$-algebra isomorphism. 

\eqref{adjointS} Let 
\[
\C=\{M_{R_F}(a) \colon a \in S_{n,s,k}(\eta,\rho,F)\} \subseteq M_n(\F_{q^s}),
\]
we need to determine $\mathcal{C}^{\top}$. By \Cref{th:transposequotient}, we know that there exists an $\F_{q^s}$-algebra isomorphism $\M_{R_{\hat{F}}}:R_{\hat{F}}\rightarrow M_n(\F_{q^s})$ such that \[
\M_{R_F}(a)^{\top}=\M_{R_{\hat{F}}}(\Theta(a)),
\]
for any $a \in R_F$. Let $z(x^n) \in Z(R)$ be as in \Cref{lm:inverseofx}, with $G(y)=\hat{F}(y)$. We need to determine $\Theta(a)$, for any $a = a_0+\sum_{i=0}^{sk-1}a_ix^i+\eta \rho(a_0)x^{ks} +RF(x^n) \in S_{n,s,k}(\eta,\rho,F)$. By \eqref{eq:descryptionadjoint}, 
\[
\begin{split}
\Theta(a) & = \Theta\left( a_0+\sum\limits_{i=1}^{sk-1}a_ix^i+\eta \rho(a_0)x^{ks}  + RF(x^n)\right) \\ & = z(x^n)\left( a_0x^{ns}+\sum\limits_{i=1}^{sk-1}\sigma^{sn-i}(a_i)x^{sn-i}+\sigma^{s(n-k)}(\eta \rho(a_0))x^{s(n-k)} \right) +R\hat{F}(x^n) \\
& = z(x^n)\left( a_0x^{sk}+\sum\limits_{i=1}^{sk-1}\sigma^{sn-i}(a_i)x^{sk-i}+\sigma^{s(n-k)}(\eta \rho(a_0)) \right)x^{s(n-k)} +R\hat{F}(x^n)
\end{split}
\]
Observe that 
\[
a_0x^{sk}+\sum\limits_{i=1}^{sk-1}\sigma^{sn-i}(a_i)x^{sk-i}+\sigma^{s(n-k)}(\eta \rho(a_0))  + R\hat{F}(x^n) \in S_{n,s,k}(\rho^{-1}(\eta^{-1}),\rho^{-1}\circ \sigma^{ks},\hat{F}),
\] 
and set \(M = M_{\hat{F}}(z(x^n)+ R\hat{F}(x^n))\), 
\(N = M_{\hat{F}}(x^{n(s-k)}+R\hat{F}(x^n))\), which are invertible matrices.  
We have shown so far that
\[
\begin{split}
\C^{\top} &= \{M_{R_F}(a)^{\top} \colon a \in S_{n,s,k}(\eta,\rho,F)\} \\
&= \{M_{R_{\hat{F}}}(\Theta(a)) \colon a \in S_{n,s,k}(\eta,\rho,F)\} \\
&\subseteq \{M M_{\hat{F}}(b) N : b \in S_{n,s,k}(\rho^{-1}(\eta^{-1}),\rho^{-1}\circ \sigma^{ks},\hat{F})\}
\end{split}
\]
The last inclusion is an equality since both sets are vector spaces of the same dimension over the field \(\mathbb{K} = \mathrm{Fix}(\rho) \cap \mathbb{F}_q = \mathrm{Fix}(\rho^{-1} \sigma^{sk}) \cap \mathbb{F}_q\). Hence, \(\mathcal{C}^{\top}\) is equivalent to  \(S_{n,s,k}(\rho^{-1}(\eta^{-1}),\rho^{-1}\circ \sigma^{ks},\hat{F})\).

\eqref{adjointD} If \(a = a_0'+\sum_{i=1}^{sk-1} a_i x^i + \gamma a_0'' x^{sk} +RF(x^n) \in D_{n,s,k}(\gamma,F)\),  then, analogously to the computation of part \eqref{adjointS}, we get 


\[
\begin{split}
\Theta (a) & = z(x^n)\left(a_0'x^{ns}+ \sum\limits_{i=1}^{sk-1}\sigma^{sn-i}(a_i)x^{ns-i} +\sigma^{s(n-k)}(\gamma a_0'')\right)x^{s(n-k)} + R\hat{F}(x^n) \\
& =z(x^n)\sigma^{s(n-k)}(\gamma) \left(\sigma^{s(n-k)}\left(\frac{1}{\gamma}\right)a_0'x^{ns}+ \sum\limits_{i=1}^{sk-1}\sigma^{sn-i}(a_i)x^{ns-i} +\sigma^{s(n-k)}(a_0'')\right)x^{s(n-k)} \\
&\quad + R\hat{F}(x^n). 
\end{split}
\]
Now, proceed as in part \eqref{adjointS}. 
\end{proof}

Now, we deal with the dual codes of the rank-metric codes in the skew polynomial framework $R_F \cong M_n(\F_{q^s})$. The theory of duality for rank-metric codes is built on the bilinear form $\langle -,-\rangle_{\rk}$ defined as in \eqref{eq:bilinearrank}, see \cite[\S 3]{delsarte1978bilinear} and \cite[\S 1.5]{sheekey2016new}. On the other hand, recall that on $M_n(\F_{q^s})$ we have considered the Frobenius bilinear form $\langle -,- \rangle$ as defined in \eqref{eq:bilinearmatrix}, i.e., $\langle A,B \rangle = \Tr_{q^s/p}(\Tr(AB))$ for every $A,B \in M_n(\F_{q^s})$. However, we note that when we work with square matrices, the induced theories of duality are related by a transposition. More precisely, the following relation holds:
\[
\langle A,B \rangle_{\rk}=\langle A,B^{\top} \rangle.
\]
As a consequence, for a subset $\C$ of $M_n(\F_{q^s})$, the dual 
\[
\C^{\perp'}=\{B \colon \langle A, B \rangle=0, \mbox{ for every } A \in \C\}\]
with respect to the bilinear form $\langle -,- \rangle$ and the dual with respect to the bilinear form $\langle -,- \rangle_{\rk}$ are related by 
\begin{equation} \label{eq:relationduals}
(\C^{\perp'})^{\top} = \C^{\perp}.\end{equation} Thus, to determine the dual of a rank-metric code, we just need to compute the adjoint of the dual code of $\C$ with respect to the Frobenius bilinear form $\langle -,- \rangle$.

We are so ready to determine the dual of the codes in the families $S_{n,s,k}(\eta,\rho,F)$ and $D_{n,s,k}(\gamma,F)$. 

\begin{proposition} \label{prop:determinantiondual}
 Let $\hat{F}(y)$ be the monic reciprocal polynomial of $F(y)$. For any $1 \leq k<n$, the following hold. 

\begin{enumerate}[I)]
    \item\label{dualS} The dual code of $S_{n,s,k}(\eta,\rho,F)$ in $R_F \cong M_n(\F_{q^s})$ is equivalent to
 \[
 S_{n,s,n-k}(\rho^{-1}(\eta F_0), \rho^{-1},\hat{F}) \subseteq R_{\hat{F}} \cong M_n(\F_{q^s}).
 \]
 \item\label{dualD} The dual code of  $D_{n,s,k}(\gamma,F)$  in $R_F \cong M_n(\F_{q^s})$ is equivalent to
 \[D_{n,s,n-k}(\sigma^{sk}(\gamma),\hat{F}) \subseteq R_{\hat{F}} \cong M_n(\F_{q^s}).
 \]
 \end{enumerate}
\end{proposition}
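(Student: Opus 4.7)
My plan is to reduce the computation of $\C^\perp\subseteq M_n(\F_{q^s})$ to a duality problem inside $R_F$ and then transport the answer to $R_{\hat F}$ via the anti-isomorphism $\Theta$ of Theorem~\ref{th:antiwithreciprocal}. By Theorem~\ref{th:correspondencebilinear}, the dual $\C^{\perp'}$ with respect to the Frobenius form $\langle -,- \rangle$ on $M_n(\F_{q^s})$ coincides, up to right multiplication by an invertible matrix, with the image of $C^{\perp_F}=\{b\in R_F:\langle a,b\rangle_F=0\ \forall\,a\in C\}$. Combining Theorem~\ref{th:transposequotient} and Lemma~\ref{lm:indipendencerepresentation}, the transpose operation is realized up to code equivalence by $\Theta$, and together with the identity $\C^\perp=(\C^{\perp'})^\top$ from~\eqref{eq:relationduals} this yields that $\C^\perp$ is equivalent, as a rank-metric code, to $\Theta(C^{\perp_F})\subseteq R_{\hat F}\cong M_n(\F_{q^s})$. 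It therefore suffices to compute $C^{\perp_F}$ and to identify $\Theta(C^{\perp_F})$, up to equivalence, with the claimed code in each of the two cases.

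For part (I), my strategy is to exhibit an explicit candidate $C^\ast\subseteq R_F$, verify by direct pairing that $C^\ast\subseteq C^{\perp_F}$, and close with a dimension count. The natural candidate is $C^\ast=\Theta^{-1}(S_{n,s,n-k}(\rho^{-1}(\eta F_0),\rho^{-1},\hat F))$; using Proposition~\ref{prop:inverseTheta} together with $x^{-i}=z(x^n)x^{ns-i}$ from Lemma~\ref{lm:inverseofx}, its elements admit an explicit expression in $R_F$ supported on the degrees $\{0,n,\ldots,n(s-1),sk,sk+1,\ldots,ns-1\}$. Pairing a general $b\in C^\ast$ with a general $a=a_0+\sum_{i=1}^{sk-1}a_ix^i+\eta\rho(a_0)x^{sk}\in S_{n,s,k}(\eta,\rho,F)$ amounts to expanding $ab$ in $R$, reducing modulo $F(x^n)$ via $x^{ns}\equiv-\sum_{l=0}^{s-1}F_lx^{nl}$, extracting the constant term, and applying $\Tr_{q^n/p}$. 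The contributions split into those linear in $a_i$ for $1\le i\le sk-1$, which pair with $b_{ns-i}$ through a single reduction and vanish by the structure of $b$, and the $a_0$-contribution, which combines $a_0b_0$ with $\eta\rho(a_0)\sigma^{sk}(b_{s(n-k)})\cdot(-F_0)$ and vanishes, using the invariance of $\Tr_{q^n/p}$ under $\rho\in\Gal(\F_{q^n}/\F_p)$, precisely when the twisting scalar of $b$ is $\rho^{-1}(\eta F_0)$. A dimension count, $\dim_{\F_p}C+\dim_{\F_p}C^\ast=sk[\F_{q^n}:\F_p]+s(n-k)[\F_{q^n}:\F_p]=\dim_{\F_p}R_F$, then forces $C^\ast=C^{\perp_F}$ and establishes (I). Part (II) follows by the same mechanism for $D_{n,s,k}(\gamma,F)$: the subfield restriction $a_0',a_0''\in\F_{q^t}$ relaxes the pairing condition to the vanishing of $\Tr_{q^t/p}$ of an $\F_{q^t}$-bilinear form via $\Tr_{q^n/p}=\Tr_{q^t/p}\circ\Tr_{q^n/q^t}$, and tracking the $\sigma$-twist through the relation $x^{sk}\gamma=\sigma^{sk}(\gamma)x^{sk}$ produces $\sigma^{sk}(\gamma)$ as the twisting scalar in the dual.

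The main obstacle is controlling the reduction of $x^{sk+j}$ modulo $F(x^n)$ when $sk+j\ge ns+n$, which arises whenever $sk\ge n+1$: a single application of $x^{ns}\equiv-\sum_l F_lx^{nl}$ no longer completes the reduction and iterated substitutions intervene. The key observation that should circumvent this difficulty is that, for the specific degrees appearing in the supports of $C$ and $C^\ast$, the iterated contributions to the constant term pair only with coefficients of $b$ already constrained by the pure-monomial equations, so the bookkeeping ultimately reduces to the single reduction step producing the scalar $-F_0$. Isolating this scalar and matching it with the twist in $\Theta(C^\ast)$ fixes the parameter $\rho^{-1}(\eta F_0)$ (respectively $\sigma^{sk}(\gamma)$) and completes the proof.
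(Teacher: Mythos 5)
Your overall architecture is the same as the paper's: reduce to the orthogonal complement inside $R_F$ with respect to $\langle-,-\rangle_F$ via Theorem~\ref{th:correspondencebilinear}, use $\C^\perp=(\C^{\perp'})^\top$, and finish with the adjoint computation of Proposition~\ref{prop:determinantionadjoint}. The problem is your choice of candidate. The set $C^\ast=\Theta^{-1}\bigl(S_{n,s,n-k}(\rho^{-1}(\eta F_0),\rho^{-1},\hat F)\bigr)$ is only \emph{equivalent} to $C^{\perp_F}$, not equal to it: $\Theta$ realizes the code-level adjoint only up to pre- and post-multiplication by the units $z(x^n)$ and $x^{s(n-k)}$ (this is exactly why Proposition~\ref{prop:determinantionadjoint} concludes with ``equivalent to'' and introduces the matrices $M$ and $N$). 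Since an orthogonal complement is a specific subspace and not an equivalence class, your direct-pairing verification fails. Concretely, unwinding Proposition~\ref{prop:inverseTheta} gives $C^\ast=z(x^n)\,W\,x^{sk}$ with $W=S_{n,s,n-k}\bigl((\eta F_0)^{-1},\sigma^{-sk}\circ\rho,F\bigr)$, and already for $s=1$, $n=5$, $k=3$ the element $\alpha x\in W$ yields $z\,\alpha x^{4}\in C^\ast$, whereas $\langle a_1x,\alpha x^4\rangle_F=\Tr_{q^n/p}(-F_0\,a_1\sigma(\alpha))\neq 0$ for suitable $a_1$, so $C^\ast\not\subseteq C^{\perp_F}$.

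The correct candidate — the one the paper uses — is the \emph{low-degree} set
\[
S'=\Bigl\{\rho^{-1}(\eta F_0\sigma^{sk}(\alpha))+\textstyle\sum_{i=1}^{s(n-k)-1}a_ix^i+\alpha x^{s(n-k)}+RF(x^n)\Bigr\}=S_{n,s,n-k}\bigl(\sigma^{-sk}(\eta^{-1}F_0^{-1}),\sigma^{-sk}\circ\rho,F\bigr)\subseteq R_F,
\]
supported in degrees $0,\dots,s(n-k)$. With this choice every product $ab$ with $a\in S$ has degree at most $sk+s(n-k)=sn$, so a single application of $x^{ns}\equiv-\sum_lF_lx^{nl}$ suffices and the only contribution to the constant term beyond $a_0c_0$ is $-F_0$ times the coefficient of $x^{sn}$; the ``iterated reduction'' obstacle you flag in your last paragraph simply never arises. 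That obstacle is a symptom of your candidate having support up to degree $ns-1$, and your claim that the higher-order contributions ``vanish by the structure of $b$'' is unsubstantiated and, by the counterexample above, false. After identifying $C^{\perp_F}=S'$ one must still apply the adjoint (Proposition~\ref{prop:determinantionadjoint}\,\ref{adjointS})) to land on $S_{n,s,n-k}(\rho^{-1}(\eta F_0),\rho^{-1},\hat F)$ up to equivalence; your plan elides this step by baking $\Theta^{-1}$ into the candidate, which is precisely where the unit factors get lost. The same correction is needed in part~(II).
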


\begin{proof}
As in the proof of \Cref{prop:determinantionadjoint}, we fix $M_{R_F}:R_F \rightarrow M_n(\F_{q^s})$ to be an $\F_{q^s}$-algebra isomorphism. Let $\langle -, - \rangle_F$ be the bilinear form defined as in \eqref{eq:bilinearoverRF} over $R_F$. \\
\emph{\ref{dualS})}  Let $S=S_{n,s,k}(\eta,\rho,F)$, we start by computing the dual $S^{\perp}$ of $S$ with respect to the bilinear form $\langle -,- \rangle_F$ of $R_F$, i.e. 
\[
S^{\perp}=\{b \in R_F \colon \langle a,b \rangle_F=0, \mbox{ for every }a \in S\}.
\]
Clearly, every monomial $\alpha x^i+RF(x^n)$, with $i \in \{1,\ldots,s(n-k)-1\}$ is orthogonal to the elements of $S$. Moreover, for any $\alpha \in \F_{q^n}$, we have that \[c = \rho^{-1}(\eta F_0\sigma^{sk}(\alpha))+\alpha x^{s(n-k)}+RF(x^n)\] is orthogonal to any element of $S$. Indeed, if $a=\sum_{i=0}^{sk-1}a_ix^i+\eta \rho(a_0)x^{sk} + RF(x^n)$ we have 
\[
\begin{split}
\langle a, c \rangle_F &= \Tr_{q^n/p}\left( \rho^{-1}(\eta F_0\sigma^{sk}(\alpha))a_0-\sigma^{sk}(\alpha) \eta \rho(a_0)F_0  \right) \\
&= \Tr_{q^n/p}\left( \rho^{-1}(\eta F_0\sigma^{sk}(\alpha))a_0)-\Tr_{q^n/p}(\sigma^{sk}(\alpha) \eta \rho(a_0)F_0  \right) \\ 
&= \Tr_{q^n/p}\left(\rho( \rho^{-1}(\eta F_0\sigma^{sk}(\alpha))a_0))-\Tr_{q^n/p}(\sigma^{sk}(\alpha) \eta \rho(a_0)F_0  \right)  \\
&= \Tr_{q^n/p}\left( \eta F_0\sigma^{sk}(\alpha)\rho(a_0)\right) - \Tr_{q^n/p}\left(\sigma^{sk}(\alpha)  \eta \rho(a_0)  F_0 \right) \\
&= 0
\end{split}
\]
As a consequence, \[
\begin{split}
S' &= \left\{\rho^{-1}(\eta F_0)\rho^{-1}(\sigma^{sk}(\alpha))+\sum\limits_{i=1}^{s(n-k)-1}a_ix^i+\alpha x^{s(n-k)}+RF(x^n), a_i \in \F_{q^s}\right\}\\
&= \left\{b+\sum\limits_{i=1}^{s(n-k)-1}a_ix^i+\sigma^{-sk}(\eta^{-1}F_0^{-1}) \sigma^{-sk}(\rho(b)) x^{s(n-k)}+RF(x^n), a_i \in \F_{q^s}\right\} \\
&= S_{n,s,n-k}(\sigma^{-sk}(\eta^{-1}F_0^{-1}), \sigma^{-sk} \circ \rho,F)
\end{split}
\]
is contained in $S^{\perp}$. Since $\langle -,- \rangle_F$ is a bilinear non degenerate  form, by \eqref{eq:dimensionalformula}, we have  \[
\dim_{\F_p}(S^{\perp})=n^2se-\dim_{\F_p}(S^{\perp})
\]
So, we get
\[
\lvert S' \rvert= \lvert S_{n,s,n-k}(\sigma^{-sk}(\eta^{-1}F_0^{-1}), \sigma^{-sk} \circ \rho,F) \rvert=q^{ns(n-k)}=\lvert S^{\perp} \rvert
\]
Thus, $S'=S^{\perp}$.

Now, let 
\(
\C=\{M_{R_F}(a) \colon a \in S\} \subseteq M_n(\F_{q^s}).
\)
We start by determine $\mathcal{C}^{\perp'}$, i.e. the dual of $\C$ with respect of $\langle -, - \rangle$. Then, by computing the adjoint of $\mathcal{C}^{\perp'}$ and by the relation in \eqref{eq:relationduals}, we will obtain $\C^{\perp}$. We know, by \Cref{th:correspondencebilinear}, there exists an invertible element $U \in \GL_n(\F_{q^s})$ such that 
\[
\langle M_{R_F}(a), M_{R_F}(b) U  \rangle= \langle a,b \rangle_F,
\]
for all \(a,b \in R_F\). As a consequence, we have
\[
\begin{split}
\C^{\perp'} &= \{\M_{R_F}(b)U: \langle a,b   \rangle_F=0, \mbox{ for every }a \in R_F\} \\
&= \{\M_{R_F}(b): \langle a,b  \rangle_F=0, \mbox{ for every } a \in R_F\}U \\
&= \{\M_{R_F}(b): b \in S_{n,s,n-k}(\sigma^{-sk}(\eta^{-1}F_0), \sigma^{-sk} \circ \rho,F) \}U
\end{split}
\]

Therefore, $\C^{\perp'}$ is equivalent to $S_{n,s,n-k}(\sigma^{-sk}(\eta^{-1}F_0^{-1}), \sigma^{-sk} \circ \rho,F)$ in $R_F \cong M_n(\F_{q^s})$. Finally, $\C^{\perp}=(\C^{\perp'})^{\top}$ is equivalent to to the adjoint of $S_{n,s,n-k}(\sigma^{-sk}(\eta^{-1}F_0), \sigma^{-sk} \circ \rho,F)$ that is
\[
S_{n,s,n-k}(\rho^{-1}(\eta F_0),  \rho^{-1},\hat{F}) \subseteq R_{\hat{F}} \cong M_n(\F_{q^s})
\]
by \emph{\ref{adjointS})} of \Cref{prop:determinantionadjoint}, that proves our assertion. \\
\emph{\ref{dualD})} We argue as in \emph{\ref{dualS})}. Let $D=D_{n,s,k}(\gamma,F)$. We start by computing the dual $D^{\perp}$ of $D$ with respect to the bilinear form $\langle -,- \rangle_F$ of $R_F$. 
Clearly, every monomial $\alpha x^i+RF(x^n)$, with $i \in \{1,\ldots,s(n-k)-1\}$ are orthogonal to the elements of $D$. Let now $\zeta \in \F_{q^n}^*$ be a nonzero element such that $\Tr_{q^n/q^{n/2}}(\zeta \gamma)=0$. It is easy to check that for any $\alpha,\beta \in \F_{q^{n/2}}$, the element \[\alpha \gamma \zeta +\alpha \zeta x^{s(n-k)}+RF(x^n)\] is orthogonal to any element of $D$.
As a consequence, the set \[
\begin{split}
&\left\{a_0'\gamma \zeta+\sum\limits_{i=1}^{s(n-k)-1}a_ix^i+a_0''\zeta x^{s(n-k)}+RF(x^n) \colon a_i \in \F_{q^n},a_0',a_0'' \in \F_{q^{n/2}} \right\} \\
&= \zeta \gamma \left\{a_0'+\sum\limits_{i=1}^{s(n-k)-1}a_ix^i+a_0''\frac{1}{\gamma} x^{s(n-k)} + RF(x^n) \colon a_i \in \F_{q^n},a_0',a_0'' \in \F_{q^{n/2}} \right\} \\
&= \zeta \gamma D_{n,s,n-k}(1/\gamma,F)
\end{split}
\]
is contained in $D^{\perp}$ and by a dimensional argument we have that it coincides with $D^{\perp}$.

Now, let  \(
\C=\{M_{R_F}(a) \colon a \in D\} \subseteq M_n(\F_{q^s}). \)
The former computation shows that $\C^{\perp'}$ is equivalent to $D_{n,s,n-k}(1/\gamma,F)$ in $R_F \cong M_n(\F_{q^s})$. Finally, $\C^{\perp}=(\C^{\perp'})^{\top}$ is equivalent to to the adjoint of $D_{n,s,n-k}(1/\gamma,F)$ that is
\[
D_{n,s,n-k}(\sigma^{sk}(\gamma),\hat{F}) \subseteq R_{\hat{F}} \cong M_n(\F_{q^s})
\]
by \emph{\ref{adjointD})} of \Cref{prop:determinantionadjoint}, that proves our assertion.
\end{proof}

\begin{remark}
As noted in \Cref{rK:MRDs1}, the family $S_{n,1,k}(\gamma,\rho,F)$ includes both Gabidulin and twisted Gabidulin codes. Note that if $F(y) = y - 1$, we have $\hat{F}(y) = F(y) = y - 1$. Thus, \ref{adjointS}) of \Cref{prop:determinantionadjoint} and \ref{adjointS}) of \Cref{prop:determinantiondual} also includes the calculation of the adjoint and dual codes of Gabidulin codes and twisted Gabidulin codes, cf. \cite[Theorem 6]{sheekey2016new}. Similarly, the family $D_{n,1,k}(\eta,F)$ corresponds to Trombetti-Zhou codes. In this case, \ref{adjointD}) of \Cref{prop:determinantionadjoint} and \ref{dualD}) of \Cref{prop:determinantiondual} includes the determination of the adjoint and dual codes of Trombetti-Zhou codes, cf. \cite[Propositions 4 and 5]{trombetti2018new}.
\end{remark}

In \cite{sheekey2020new} and in \cite{lobillo2025quotients}, it is proved that the families $S_{n,s,k}(\eta,\rho,F)$ and $D_{n,s,k}(\gamma,F)$ contain new MRD codes for infinitely many choices of the parameters $s$ and $n$, when $k \leq  n/2$, cf. \cite[Theorem 11]{sheekey2020new} and \cite[Theorem 6.3]{lobillo2025quotients}.  As a result, the families $S_{n,s,k}(\eta,\rho,F)$ and $D_{n,s,k}(\gamma,F)$ represent the largest known families of MRD codes. Thanks to the tools developed here, we are able to extend this result to the case $k > n/2$.

First, we recall the notion of idealisers, centralisers and centre of a rank-metric code. These are algebraic constructions with precedents in the study of noncommutative rings, which generalize for instance the notions of the nuclei and centre of a (non-necessarily associative) division algebras. They are developed in the realm of Coding Theory in  \cite{liebhold2016automorphism,lunardon2018nuclei,sheekey2020new}. In what follows, all codes are additive.

\begin{definition} Let $\C$ be a rank-metric code in $M_n(\F)$, with $\F$ a finite field. 
The \textbf{left idealiser} $\lid(\C)$ and the \textbf{right idealiser} $\rid(\C)$ are defined as
\[
\lid(\C) = \{A \in M_n(\F) \colon  A\C\subseteq \C\}
\]
and
\[
\rid(\C) = \{B \in M_n(\F) \colon \C B  \subseteq\C\},
\]
respectively. \\
The \textbf{centraliser} $\cen(\C)$ is defined as
\[
\cen(\C) = \{A \in M_{n}(\F) \colon AX=XA \mbox{ 
for every }X\in \C\}.
\]
The \textbf{centre} $Z(\C)$ of $\C$ is defined as the intersection of the left idealiser and the centraliser.
\[
Z(\C) = \lid(\C)\cap \cen(\C).
\]
\end{definition}

These objects are subrings of $M_n(\F)$. For an MRD code $\C$, its left idealiser $\lid(\C)$ and right idealiser $\rid(\C)$ turn out to be fields (see \cite[Corollary 5.6]{lunardon2018nuclei}).
We prove that for any MRD code $\C$, its centraliser - and hence its centre - is also a field. Moreover, if the minimum distance of $\C$ is not equal to $n$, then the centraliser is isomorphic to $\F$.

\begin{proposition} \label{prop:centraliserisfield}
Let $\C$ be a rank-metric code in $M_n(\F)$. Then, the centraliser $\cen(\C)$ of $\C$ contains a field isomorphic to $\F$. Moreover, if $\C$ is an MRD code, then $\cen(\C)$ is a field.
\end{proposition}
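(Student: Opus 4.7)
The first assertion is immediate, since the scalar matrices $\F\cdot I_n$ lie in $\cen(\C)$ and form a subfield of $M_n(\F)$ isomorphic to $\F$.

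For the second assertion, my plan is to realise $\cen(\C)$ as an endomorphism ring of a module. Let $A$ be the $\F$-subalgebra of $M_n(\F)$ generated by $\C$ together with $I_n$. Then tautologically $\cen(\C)=\End_A(\F^n)$. By Schur's lemma, once $\F^n$ is known to be a simple left $A$-module, $\End_A(\F^n)$ is a division ring, which, since $\F$ is finite, must be a field by Wedderburn's little theorem. The proof therefore reduces to showing that $\C$ admits no proper nonzero invariant subspace of $\F^n$.

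I expect this irreducibility step to be the main obstacle. Assume, for contradiction, that there is a $\C$-invariant subspace $V\subseteq\F^n$ with $1\le t=\dim V\le n-1$, and consider the additive restriction map
\[
\psi\colon \C\longrightarrow \End_{\F}(V)\cong M_t(\F),\qquad X\longmapsto X\vert_V.
\]
Every $X\in\ker\psi$ satisfies $V\subseteq\ker X$ and so factors through the projection $\F^n\to\F^n/V$, yielding a rank-preserving additive embedding $\ker\psi\hookrightarrow M_{n\times(n-t)}(\F)$. When $t\le n-d$, the Singleton bound for rank-metric codes gives $|\ker\psi|\le|\F|^{n(n-t-d+1)}$, while trivially $\lvert\C\vert_V\rvert\le|\F|^{t^2}$; multiplying these and comparing with $|\C|=|\F|^{n(n-d+1)}$ forces $nt\le t^2$, contradicting $1\le t\le n-1$. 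In the complementary range $t>n-d$, every nonzero $X\in\ker\psi$ would have rank at most $n-t<d$, so $\psi$ is injective; moreover, the inequality $\rk(X)\le\rk(X\vert_V)+(n-t)$ shows that $\C\vert_V$ has minimum distance at least $d-n+t$ in $M_t(\F)$, so the Singleton bound there yields $|\C|\le|\F|^{t(n-d+1)}$, again forcing $n\le t$, impossible.

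Once irreducibility is secured, the rest is routine. Any nonzero $L\in\cen(\C)$ has $\C$-stable kernel, which by irreducibility is trivial, so $L$ is invertible in $M_n(\F)$; conjugating $XL=LX$ by $L^{-1}$ then gives $L^{-1}\in\cen(\C)$. Hence $\cen(\C)$ is a finite ring with $1$ in which every nonzero element is a unit, and another appeal to Wedderburn's little theorem concludes.
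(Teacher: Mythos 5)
Your proof is correct, but it takes a genuinely different route from the paper's. The paper argues directly on a non-invertible element $A\in\cen(\C)$: it picks $v$ with $vA=0$, invokes the transitivity property of MRD codes from \cite[Theorem 5.1]{lunardon2018nuclei} (for any nonzero $v$ and any $w$ there is $Y\in\C$ with $vY=w$) to produce a codeword sending $v$ onto a suitable unit vector, and derives a contradiction in two lines before finishing with Wedderburn. You instead prove from scratch that $\F^n$ is a simple module over the algebra generated by $\C$, via two Singleton-bound counting arguments on the restriction map $\psi$ (the case split $t\le n-d$ versus $t>n-d$ is handled correctly, including the semifield case $d=n$, where only the second case occurs), and then conclude by Schur plus Wedderburn. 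What the paper's route buys is brevity, at the cost of importing the combinatorial input from the literature; what your route buys is a self-contained argument and a strictly stronger intermediate statement (irreducibility of the natural module for any additive MRD code), which essentially re-proves the transitivity property being cited. One small point to make explicit: your use of $\ker\psi$ and of ``every nonzero codeword has rank at least $d$'' relies on $\C$ being additive; this is harmless here, since the paper declares all codes in this section to be additive (and the cited Theorem 5.1 needs the same hypothesis), but it should be stated.
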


\begin{proof}
The first claim is clear because the center of \(M_n(\F)\) is \(\F\). 
Now assume that $\C$ is an MRD code. Suppose, for contradiction, that there exists a nonzero matrix $A \in \cen(\C)$ that is not invertible. By definition of the centraliser, we have
\begin{equation} \label{eq:conditioncentraliser}
AX = XA,
\end{equation}
for every $X \in \C$. Since $A$ is not invertible, there exists a vector $v \in \F^n$ such that $vA = 0$. Assume that the $i$-th row of $A$ is nonzero, and let $w \in \F^n$ be the standard unit vector with a $1$ in the $i$-th position and $0$ elsewhere. Because $\C$ is an MRD code, by \cite[Theorem 5.1]{lunardon2018nuclei}, there exists a codeword $Y \in \C$ such that $vY = w$. Using equation~\eqref{eq:conditioncentraliser} with $X = Y$ and multiplying both sides on the left by $v$, we get
\[
vAY = vYA.
\]
Since $vA = 0$, the left-hand side is zero, so $0 = vYA = wA$. But $wA$ is the $i$-th row of $A$, which is nonzero by assumption, yielding a contradiction. Hence, $\cen(\C)$ is  a finite division ring and, by Wedderburn's Theorem, a field.
\end{proof}

\begin{theorem} \label{cor:centraliserisfieldMRD}
Let $\C$ be an MRD code in $M_n(\F)$, with $\F$ a finite field. If $d(\C) < n$, then
\[
\cen(\C) \cong \F.
\]
\end{theorem}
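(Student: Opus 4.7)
The plan is to argue by contradiction, starting from \Cref{prop:centraliserisfield}. That proposition already supplies the inclusion $\F\cdot I \subseteq \cen(\C)$ and asserts that $\cen(\C)$ is a field, so it suffices to rule out any non-scalar $A \in \cen(\C)$. Suppose such $A$ exists. Since $A$ is not a scalar matrix and $d(\C)<n$ forces $n \geq 2$, standard linear algebra yields a nonzero row vector $v \in \F^n$ for which $v$ and $vA$ are $\F$-linearly independent.

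The core idea is to examine the additive subcode
\[
\C_0 = \{X \in \C : vX = 0\}
\]
and exploit the commutation relation $AX = XA$. For every $X \in \C_0$ one gets $vAX = vXA = 0$, so $vA$ also lies in the left kernel of $X$. Since $v$ and $vA$ are linearly independent, every codeword in $\C_0$ must satisfy $\rk(X) \leq n-2$.

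Set $q = |\F|$ and $d = d(\C)$. The contradiction will come from sandwiching $|\C_0|$ in two incompatible ways. The additive map $\phi_v\colon X \mapsto vX$ from $\C$ to $\F^n$ has image of size at most $q^n$, so its fibres (all of size $|\C_0|$) give $|\C_0| \geq |\C|/q^n = q^{n(n-d)}$ by the MRD hypothesis. On the other hand, left-multiplying each element of $\C_0$ by an invertible matrix whose first two rows are $v$ and $vA$ produces an equivalent additive rank-metric code whose codewords have their first two rows zero, hence embeddable in $M_{(n-2)\times n}(\F)$ with minimum distance still $\geq d$; the Singleton-like bound \eqref{eq:singleton} then gives $|\C_0| \leq q^{n(n-d-1)}$ when $d \leq n-2$, and $|\C_0| \leq 1$ when $d = n-1$ (the minimum distance exceeding the smaller dimension). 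In every case compatible with $d < n$, this contradicts the lower bound, so no non-scalar $A$ can exist and $\cen(\C) = \F\cdot I \cong \F$.

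The main obstacle, more conceptual than technical, is turning the algebraic identity $AX = XA$ into the geometric statement that $(vA)X = 0$ whenever $vX = 0$. This is precisely the leverage that the (two-sided) centraliser provides beyond the one-sided idealisers: it is what allows the usual one-dimensional Singleton-truncation trick (left kernel contains $v$, rank $\leq n-1$) to be upgraded to a two-dimensional one (left kernel contains both $v$ and $vA$, rank $\leq n-2$), giving a bound tight enough to clash with the MRD hypothesis. No rank-distribution theorem or external result beyond \Cref{prop:centraliserisfield} and the Singleton bound is required.
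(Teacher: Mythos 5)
Your proof is correct, but it takes a genuinely different route from the paper's. You shorten the code at a row vector $v$ chosen so that $v$ and $vA$ are independent (possible exactly because $A$ is non-scalar), use the commutation $AX=XA$ to show that the left kernel of every $X\in\C_0=\{X\in\C\colon vX=0\}$ contains the plane $\langle v,vA\rangle$, and then squeeze $|\C_0|$ between the fibre bound $|\C|/q^{n}=q^{n(n-d)}$ and the Singleton bound $q^{n(n-d-1)}$ for the punctured code in $M_{(n-2)\times n}(\F)$ (with the degenerate case $d=n-1$ handled separately). This is the classical shortening argument for idealisers, upgraded from a one-row to a two-row kernel thanks to the two-sided commutation; its only external inputs are the Singleton bound \eqref{eq:singleton} and the additivity of $\C$ (needed for the fibre count, and guaranteed by the paper's standing convention). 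The paper instead exploits the field structure of $\cen(\C)$ established in \Cref{prop:centraliserisfield}: writing $B=\cen(\C)$ and $m=[B:\F]$, every codeword becomes $B$-linear, so every rank is a multiple of $m$; since an MRD code with $d<n$ contains codewords of ranks $n$ and $n-1$ by \cite[Lemma 2.1]{lunardon2018nuclei}, this forces $m=1$. The paper's argument is shorter and yields the structural byproduct that all rank weights are divisible by $[\cen(\C):\F]$ (which illuminates the subsequent remark on semifields, where $d=n$ and larger centralisers do occur); yours is more self-contained, needing neither the rank-distribution lemma nor even the field property of the centraliser from \Cref{prop:centraliserisfield}.
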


\begin{proof}
 Let $V$ be a vector space over \(\F\) of dimension $n$. Fixing an $\F$-basis of $V$ yields an $\F$-algebra isomorphism
\[
\tau: M_n(\F)\longrightarrow\End_{\F}(V),
\]
which preserves rank. Hence $\tau(\C)\subseteq \End_{\F}(V)$, and the centraliser corresponds to
\begin{equation}\label{eq:mapcentraliser}
B = \tau\big(\cen(\C)\big)
        = \{\phi\in \End_{\F}(V) : \phi\circ a = a\circ \phi\ \text{for every } a\in \tau(\C)\}.
\end{equation}
By Proposition \ref{prop:centraliserisfield}, $\cen(\C)$ is a field containing a subfield isomorphic to $\F$, hence $B$ is a field with $\F\subseteq B\subseteq \End_{\F}(V)$. Suppose, for a contradiction, that $\cen(\C)\neq \F$, equivalently $[B:\F]=m\ge 2$. From \eqref{eq:mapcentraliser} we see that every $a\in \tau(\C)$ commutes with $B$, hence each $a$ is $B$-linear:
\[
a(\phi(\alpha))=\phi(a(\alpha))\qquad \text{for all } \phi\in B,\ \alpha\in V.
\]
Therefore, if we regard $V$ as a vector space over $B$ (of dimension $n/m$), the rank $\rk_{\F}(a)$ of $a$ over $\F$ satisfies the standard relation
\[
\rk_{\F}(a)= m\cdot \rk_{B}(a),
\]
where $\rk_{B}(a)$ denotes the rank of $a$ regarded as an $B$-linear map. This implies every possible rank over $\F$ attainable by codewords of $\C$ is a multiple of $m \geq 2$. Since $\C$ is MRD with $d(\C)<n$, it is known (see \cite[Lemma 2.1]{lunardon2018nuclei}) that $\C$ contains codewords having rank weights $n$ and $n-1$. This is impossible when all the rank weights are multiples of $m\ge 2$, because $n-1$ cannot be divisible by $m$. The contradiction shows that $m=1$, hence $\cen(\C)\cong \F$.
\end{proof}

\begin{remark}
It is worth pointing out that \Cref{cor:centraliserisfieldMRD} does not hold in general for MRD codes in $M_n(\F)$ with minimum distance equal to $n$. In fact, such codes correspond to semifields, and the centraliser $\cen(\C)$ is isomorphic to the right nucleus of the semifield associated with $\C$ (see \cite[Proposition 5]{sheekey2020new}). Therefore, besides having a representation as matrices in $M_n(\F)$, such codes may also admit representations over a field extension of $\F$. For example, consider the following code. Let $\sigma: x \mapsto x^q$ be the Frobenius automorphism of $\F_{q^4}$, and consider the skew polynomial ring $R = \F_{q^4}[x; \sigma]$. Define the code
    \[
    \C = \{ (a_0 + \delta a_1) + \gamma (a_2 + \delta a_3)x^2 \colon a_i \in \F_q \} \subseteq \frac{R}{R(x^4-1)} \cong M_4(\F_q),
    \]
    where $\delta \in \F_{q^4} \setminus \F_{q^2}$ and $\gamma \in \F_{q^2}$. Then $\C$ has dimension 4 over $\F_q$, and it can be shown that if $\delta^2$ is a nonsquare in $\F_{q^2}$ and $N_{\F_{q^2}/\F_q}(\gamma)$ is a nonsquare in $\F_q$, all elements of $\C$ have full rank (see \cite[Theorem 4.6]{ebert2009infinite}). Hence, $\C$ is an MRD code having minimum distance $n=4$. However, the centraliser of $\C$, which is isomorphic to the right nucleus of the corresponding semifield, is isomorphic to $\F_{q^2}$ (see \cite{ebert2009infinite} for further details).
\end{remark}

The idealisers, centraliser and the centre can be viewed as invariants of codes. For the centre and centraliser to be considered invariants, we need to assume that the identity matrix is contained in each code. Note that this is not a restrictive condition for MRD codes. Indeed, these codes always contain an invertible matrix, see e.g. \cite[Lemma 2.1]{lunardon2018nuclei}, and thus, up to equivalence, we can always assume that an MRD code contains the identity matrix.

\begin{proposition} [see \textnormal{\cite{lunardon2018nuclei} and \cite[Proposition 4]{sheekey2020new}}]
\label{prop:idequiv}
Suppose $\C$ and $\C'$ are two equivalent codes in $M_n(\F)$. Then
\[
|\lid(\C)| = |\lid(\C')| \mbox{ and}\quad |\rid(\C)| = |\rid(\C')|\]
and if both $\C$ and $\C'$ contain the identity, then
\[
\quad |\cen(\C)| = |\cen(\C')| \mbox{ and} \quad |Z(\C)| = |Z(\C')|.
\]
\end{proposition}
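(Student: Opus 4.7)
The plan is to exhibit, for each of the four invariants, an explicit bijection induced by the data of the equivalence. Write $\C' = U\C^\rho V$ with $U, V \in \GL_n(\F)$ and $\rho \in \Aut(\F)$. The two bijections of $M_n(\F)$ I will use are $\phi_U \colon A \mapsto U A^\rho U^{-1}$ and $\psi_V \colon B \mapsto V^{-1} B^\rho V$; both are compositions of a similarity and an entrywise Galois automorphism, hence bijective on the whole of $M_n(\F)$.

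First I would dispatch the idealisers directly, without any identity assumption. If $A \in \lid(\C)$ then $\phi_U(A)\, \C' = U(A\C)^\rho V \subseteq U\C^\rho V = \C'$, so $\phi_U$ sends $\lid(\C)$ into $\lid(\C')$; the same computation applied to the inverse equivalence $\C = U^{-1}(\C')^{\rho^{-1}} V^{-1}$ promotes this into a bijection. A symmetric manipulation $\C' \psi_V(B) = U(\C B)^\rho V$ yields $|\rid(\C)| = |\rid(\C')|$.

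For the centraliser I would reduce a general equivalence to a pure similarity using the hypothesis $I \in \C'$. Writing $I = U A_0^\rho V$ with $A_0 \in \C$ gives $V = A_0^{-\rho} U^{-1}$, so setting $\C_1 = \C A_0^{-1}$ we get $\C' = U\C_1^\rho U^{-1}$ and $I \in \C_1$. The crucial step is the equality $\cen(\C) = \cen(\C_1)$: one inclusion is immediate since $A_0 \in \C$; for the other, the defining condition $A X A_0^{-1} = X A_0^{-1} A$ for all $X \in \C$, specialised to $X = I \in \C$ (this is where the hypothesis $I \in \C$ is essential), forces $A A_0 = A_0 A$, after which the relation collapses to $AX = XA$ for every $X \in \C$. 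Once we are in the similarity case $\C' = U\C_1^\rho U^{-1}$, the standard identification $\cen(\C_1^\rho) = \cen(\C_1)^\rho$ shows that $\phi_U$ gives a bijection $\cen(\C_1) \to \cen(\C')$, and chaining yields $|\cen(\C)| = |\cen(\C')|$.

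For the centre I would observe that $\lid(\C) = \lid(\C_1)$, because right multiplication by the invertible matrix $A_0^{-1}$ preserves the left idealiser condition. Hence $Z(\C) = \lid(\C) \cap \cen(\C) = \lid(\C_1) \cap \cen(\C_1) = Z(\C_1)$, and since the bijection $\phi_U$ implements both the $\lid$-transport and the $\cen$-transport simultaneously, it restricts to a bijection $Z(\C_1) \to Z(\C')$, giving $|Z(\C)| = |Z(\C')|$. The most delicate point of the whole argument is the reduction to a similarity: without the hypothesis $I \in \C$ the equality $\cen(\C) = \cen(\C A_0^{-1})$ would fail in general, so that assumption is exactly what makes the centraliser and centre arguments go through.
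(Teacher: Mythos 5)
Your proof is correct: the transport maps $\phi_U$ and $\psi_V$ handle the idealisers, and the reduction to a pure similarity via an invertible codeword $A_0$ (using $I\in\C'$ to find $A_0$ and $I\in\C$ to force $A A_0=A_0A$) correctly settles the centraliser and the centre. Note that the paper itself gives no proof of this proposition — it is quoted from \cite{lunardon2018nuclei} and \cite[Proposition 4]{sheekey2020new} — and your argument is essentially the standard one from those references, so there is nothing further to compare.
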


Relying on Proposition \ref{prop:idequiv}, for an MRD code $\C$ in $M_n(\F)$, we define the \textbf{nuclear parameters} of $\C$ as the tuple
\[
(\lvert \C \rvert, \lvert \lid(\C) \rvert,\lvert \rid(\C) \rvert,\lvert \cen(\C) \rvert,\lvert Z(\C) \rvert)
\]
Note that this definition also depends on the order of the matrices $n$ and the field $\F$.

For the families $S_{n,s,k}(\eta,\rho,F)$ and $D_{n,s,k}(\gamma,F)$, the nuclear parameters have been computed for $1 \leq k \leq n/2$. We reproduce the statements in the finite field case for the reader's convenience.

\begin{theorem} [\textnormal{\cite[Theorem 9]{sheekey2020new}}] \label{th:parconstrsheekey}
Let $q=p^e$, for some prime $p$. Assume that $1 \leq k \leq n/2$ and $sk>1$. Let $\C=S_{n,s,k}(\eta,\rho,F) \subseteq R_F \cong M_n(\F_{q^s})$ defined as in \eqref{eq:johncodes}. Assume that $\rho(y)=y^{p^h}$ and $\sigma(y)=y^{p^{ej}}$ for any $y \in \fqn$, with $(j,n)=1$. Let $\C'$ be any code equivalent to $\C$ containing the identity. If $\eta \neq 0$, then 
\[
\lid(\C') \cong \F_{p^{(ne,h)}}, \ \rid(\C') \cong  \F_{p^{(ne,ske-h)}}, \ \cen(\C')  \cong  \F_{p^{se}} \mbox{ and } Z(\C') \cong  \F_{p^{(e,h)}}
\]
If $\eta = 0$, then $S_{n,s,k}(0,\rho,F)=S_{n,s,k}(0,0,F)$ for all $\rho$, and 
\[
\lid(\C') \cong \F_{p^{ne}}, \ \rid(\C')  \cong  \F_{p^{ne}}, \ \cen(\C')  \cong  \F_{p^{se}}\mbox{ and } Z(\C') \cong  \F_{p^e}.
\]
\end{theorem}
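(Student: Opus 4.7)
The plan is to translate the defining conditions of $\lid(\C)$, $\rid(\C)$, $\cen(\C)$, and $Z(\C)$ into explicit constraints on elements $b = \sum_{j=0}^{ns-1} b_j x^j \in R_F$, exploiting the very concrete description of $S_{n,s,k}(\eta,\rho,F)$ in the skew polynomial framework. A crucial preliminary observation is that since $k \le n/2$, the exponents involved (namely $0,1,\ldots,sk-1,sk$) together with shifts by at most $sk$ still satisfy $2sk \le ns$, so no reduction modulo $F(x^n)$ is forced when one multiplies a codeword by a low-degree element. This keeps the combinatorics transparent.

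For the centraliser, the cleanest step is to invoke \Cref{cor:centraliserisfieldMRD}: whenever $k \ge 2$ the minimum distance $n-k+1$ is strictly less than $n$, and one obtains $\cen(\C') \cong \F_{q^s}$ immediately. The boundary case $k=1$ with $s>1$ (still allowed by $sk>1$) corresponds to a semifield code; there one argues directly that the right nucleus is $\F_{q^s}$, for example by showing that any matrix commuting with the generators must be induced by multiplication by an element of $E_F$. Once $\cen(\C')\cong \F_{q^s}$ is established, the centre is obtained at the end of the argument by intersecting $\lid(\C')$ with $\cen(\C')$, which on the level of subfields of $\F_{p^{ne}}$ amounts to computing $\gcd(\gcd(ne,h),se)=\gcd(e,h)$.

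For the left idealiser, I impose $b \cdot c \in \C$ separately for the two families of generators $\alpha x^i$ (with $1\le i\le sk-1$) and $\alpha + \eta\rho(\alpha)x^{sk}$. The monomial computations $b\cdot \alpha x^i = \sum_j b_j\sigma^j(\alpha)\,x^{j+i}$ give, for varying $i$ and $\alpha$, support restrictions that kill almost all $b_j$: only coefficients with specific exponents $j$ can survive (roughly, those with $j\equiv 0 \pmod n$, because $\sigma$ has order $n$ and $(j,n)=1$ is used to separate classes). Plugging these restrictions into the constraint coming from $b\cdot(\alpha+\eta\rho(\alpha)x^{sk})$ turns the linking condition ``coefficient at $x^0$ equals $\eta\rho$ of coefficient at $x^{sk}$'' into a Frobenius-stability requirement on $b_0$: the relevant coefficient must be fixed by $\sigma^{a}\rho^{b}$ for suitable exponents determined by $\eta$ and the shift $sk$. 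Writing $\rho=\sigma_{p^h}$ and $\sigma=\sigma_{p^{ej}}$ (where $\sigma_{p^t}(y)=y^{p^t}$), the resulting stabiliser is the fixed field of a Frobenius $\sigma_{p^h}$ inside $\F_{p^{ne}}$, namely $\F_{p^{(ne,h)}}$. For the right idealiser the same strategy applies to $c\cdot b$; the difference is that right multiplication by $b$ introduces no Frobenius twist on $b_j$, but the linking condition now pairs up coefficients at $x^0$ and $x^{sk}$ through the composition $\rho \sigma^{-sk}$ (equivalently $\sigma^{sk}\rho^{-1}$), producing the fixed field $\F_{p^{(ne,ske-h)}}$.

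The main obstacle is Step 2/3: one must carefully track how each generator interacts with the $\F_q$-structure, the $\sigma$-structure, and the $\rho$-twist simultaneously, and make sure that the reductions really do force $b_j=0$ for $j\not\equiv 0 \pmod n$ without overlooking cross terms between monomial generators and the linked generator. The case $\eta=0$ is genuinely simpler: without the twist, the code is just the $\F_{q^n}$-span of $1,x,\ldots,x^{sk-1}$, a two-sided $\F_{q^n}$-module in $R_F$, so the idealiser analysis yields $\lid(\C'),\rid(\C')\cong \F_{p^{ne}}=\F_{q^n}$ directly, the centraliser argument is unchanged, and the centre collapses to $\F_{p^e}=\F_q$. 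Combining all cases gives the two tables in the statement.
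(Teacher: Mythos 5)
First, a point of comparison: the paper does not prove this statement at all. It is imported verbatim from \cite[Theorem 9]{sheekey2020new} ("we reproduce the statements\ldots for the reader's convenience"), and the paper's own contribution starts only at Theorem~\ref{th:parconstrsheekey2}, where the case $k>n/2$ is deduced from this cited result via the new duality machinery. So you are in effect re-deriving Sheekey's theorem from scratch, and your sketch has several genuine gaps.

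The most serious one is the preliminary observation that "no reduction modulo $F(x^n)$ is forced": that is true for a product of two codewords, but in the idealiser computation the candidate element $b$ ranges over all of $R_F$ and may have degree up to $ns-1$, so $b\cdot\alpha x^i$ does get reduced, and the wrap-around $x^{ns}\equiv -(F_0+F_1x^n+\cdots+F_{s-1}x^{n(s-1)})$ redistributes top coefficients of $b$ across the exponents $0,n,\ldots,(s-1)n$. Controlling these terms is precisely where the hypothesis $k\le n/2$ enters; your argument never invokes it, which is a red flag given that the whole point of the paper is that $k>n/2$ needs a different proof. Relatedly, your support analysis is off: the monomial conditions should pin $b$ down to a constant $b_0\in\F_{q^n}$ (an element such as $x^n$ commutes with $\F_{q^n}$ but shifts the support of a codeword outside $\{0,\ldots,sk\}$, so it is not in the idealiser), not to elements supported on exponents $\equiv 0\pmod n$. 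Two further gaps: (i) for $k=1$ and $s\ge 2$ (allowed by $sk>1$) one has $d(\C)=n$, so Theorem~\ref{cor:centraliserisfieldMRD} does not apply, and the Remark following it exhibits a semifield code whose centraliser is strictly larger than the coefficient field --- so "one argues directly" is exactly the nontrivial step being skipped; (ii) the centre cannot be computed as $\gcd(\gcd(ne,h),se)$ of the field orders, since $\lid(\C')$ and $\cen(\C')$ are embedded differently in $M_n(\F_{q^s})$ (the former as constants in $\F_{q^n}$, the latter as $E_F=\F_q[x^n]/(F(x^n))$) and the intersection of two subfields of a matrix algebra is not determined by their cardinalities: for $n=4$, $s=2$, $e=1$, $h=2$ your formula gives $2$ while $(e,h)=1$. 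The correct computation is $\mathrm{Fix}(\rho)\cap\F_q=\F_{p^{(e,h)}}$.
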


\begin{theorem}  [see \textnormal{\cite[Theorem 6.2]{lobillo2025quotients}}]\label{th:parametersextensiontrombetti}
Assume that \(n\) is even, $1 \leq k \leq n/2$ and $sk\geq 3$. Let $\C=D_{n,s,k}(\gamma,F) \subseteq R_F \cong M_n(\F_{q^s})$ defined as in \eqref{eq:finiteextensiontrombzhou}. Then
\[
\lid(\C)  \cong \F_{q^{n/2}}, \ \rid(\C) \cong  \F_{q^{n/2}}, \ \cen(\C)  \cong  \F_{q^s} \mbox{ and }  Z(\C) \cong  \F_{q}.
\]
\end{theorem}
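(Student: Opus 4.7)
My plan is to compute the four invariants directly inside $R_F$, taking advantage of the many codewords $\alpha x^i \in \C$ (for $\alpha \in \F_{q^n}$ and $1 \leq i \leq sk-1$, which exist because $sk \geq 3 \geq 2$) and of the rigid shape of $\C$ at the boundary degrees $0$ and $sk$.

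For the centraliser, I take $A \in \cen(\C)$ and impose $A\alpha x = \alpha x A$ for every $\alpha \in \F_{q^n}$. Writing $A = \sum_i A_i x^i$, the coefficient of $x^{i+1}$ yields $A_i \sigma^i(\alpha) = \alpha \sigma(A_i)$; setting $\alpha = 1$ gives $A_i \in \F_q$, and then the relation forces $\sigma^i$ to fix all of $\F_{q^n}$ whenever $A_i \neq 0$, i.e.\ $n \mid i$. Hence $A \in \F_q[x^n]/(F(x^n)) = E_F \cong \F_{q^s}$, which coincides with the centre of $R_F$ and is automatically contained in $\cen(\C)$. This already settles $\cen(\C) \cong \F_{q^s}$.

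For the left idealiser, since $1 \in \C$ I first deduce $\lid(\C) \subseteq \C$, so any $\beta \in \lid(\C)$ has the form $\beta = b_0 + \sum_{i=1}^{sk-1} b_i x^i + \gamma b_0' x^{sk}$ with $b_0, b_0' \in \F_{q^t}$ and $b_i \in \F_{q^n}$. Imposing $\beta \cdot x^j \in \C$ for $j = 1, \ldots, sk-1$, which by the hypothesis $k \leq n/2$ never causes reduction modulo $F(x^n)$, forces the coefficients of $\beta x^j$ at the forbidden degrees $sk+1, \ldots, sk+j$ to vanish. Running through all $j$ successively kills $b_0'$ and $b_2, b_3, \ldots, b_{sk-1}$, leaving $\beta = b_0 + b_1 x$. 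To eliminate $b_1$, I test $\beta \cdot \alpha x^{sk-1} \in \C$ for arbitrary $\alpha \in \F_{q^n}$: the resulting coefficient $b_1 \sigma(\alpha)$ at degree $sk$ must lie in $\gamma \F_{q^t}$; since $\sigma$ is bijective on $\F_{q^n}$ and $[\F_{q^n}:\F_{q^t}] = 2$, this can hold for every $\alpha$ only if $b_1 = 0$. Conversely every $b_0 \in \F_{q^t}$ left-stabilises $\C$ by inspection, so $\lid(\C) \cong \F_{q^{n/2}}$.

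For the right idealiser I invoke part (II) of \Cref{prop:determinantionadjoint}, which exhibits $\C^\top$ as an equivalent member of the same family, namely $D_{n,s,k}(\sigma^{s(n-k)}(\gamma^{-1}), \hat F)$. Transposition swaps left and right idealisers, and these are fields (for MRD codes) whose isomorphism type is preserved under equivalence, so the computation of step 2 applied to the adjoint yields $\rid(\C) \cong \F_{q^{n/2}}$. Finally, $Z(\C) = \lid(\C) \cap \cen(\C)$ is computed inside $R_F$ as the intersection of the copy of $\F_{q^t}$ sitting as constants with the centre $E_F = \F_q[x^n]/(F(x^n))$; a degree-zero class $a + RF(x^n)$ can equal $p(x^n) + RF(x^n)$ (with $\deg p < s$) only when $p(y) - a$ vanishes in $R$, forcing $p$ constant and $a \in \F_q$, whence $Z(\C) \cong \F_q$. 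The main obstacle is the left-idealiser step: it requires careful accounting of which $b_i$ is killed by which $\beta x^j$, and depends critically on both the upper bound $k \leq n/2$ (to avoid wrap-around) and the dimension gap $[\F_{q^n}:\F_{q^t}] = 2$ (to kill the stubborn coefficient $b_1$).
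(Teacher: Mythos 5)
The paper itself does not prove this theorem: it is quoted verbatim from \cite[Theorem 6.2]{lobillo2025quotients} (``we reproduce the statements \dots for the reader's convenience''), so there is no internal proof to compare against. Your self-contained computation inside $R_F$ is correct and recovers the cited result. The key steps all check out: $\lid(\C)\subseteq\C$ because $1\in\C$; the products $\beta x^j$ for $j=1,\dots,sk-1$ have degree at most $2sk-1\le ns-1$ under the hypothesis $k\le n/2$, so no reduction modulo $F(x^n)$ occurs and the forbidden coefficients successively kill $b_0'$ and $b_{sk-1},\dots,b_2$; the condition $b_1\sigma(\alpha)\in\gamma\F_{q^t}$ for all $\alpha$ kills $b_1$ because $|\gamma\F_{q^t}|=q^{n/2}<q^n$; and the transfer to the right idealiser via \Cref{prop:determinantionadjoint}\eqref{adjointD} and \Cref{prop:idequiv} is non-circular, since the adjoint proposition is established earlier and independently. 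One imprecision worth fixing: in the centraliser step, the equation $A_i\sigma^i(\alpha)=\alpha\sigma(A_i)$ ``at the coefficient of $x^{i+1}$'' ignores that $A_{ns-1}x^{ns-1}\cdot\alpha x$ produces $x^{ns}$, which reduces modulo $F(x^n)$ and redistributes into degrees $0,n,\dots,n(s-1)$. The constant-term comparison gives $A_{ns-1}\sigma^{ns-1}(\alpha)F_0=\alpha\sigma(A_{ns-1})F_0$, which (using $F_0\ne 0$, $\alpha=1$, then general $\alpha$) forces $A_{ns-1}=0$, after which your equations hold as written; so the conclusion $\cen(\C)=E_F\cong\F_{q^s}$ stands. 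Incidentally, your direct centraliser argument uses only $\alpha x\in\C$ and therefore also covers $k=1$ (where $d(\C)=n$ and \Cref{cor:centraliserisfieldMRD} would not apply), which is a small bonus over simply invoking that theorem.
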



In the next results, we extend both theorems to the case $n/2 < k \leq n-1$. Let us start with the codes $S_{n,s,k}(\eta,\rho,F)$. 

\begin{theorem} \label{th:parconstrsheekey2}
Let $q=p^e$, for some prime $p$.  Assume that $1 \leq k \leq n-1$ and $sk>1$. Let $\C=S_{n,s,k}(\eta,\rho,F) \subseteq M_n(\F_{q^s})$ defined as in \eqref{eq:johncodes}. Assume that $\rho(y)=y^{p^h}$ and $\sigma(y)=y^{p^{ej}}$ for any $y \in \fqn$, with $\gcd(j,n)=1$. Let $\C'$ be any code equivalent to $\C$ containing the identity.  If $\eta \neq 0$, then 
\[
\lid(\C')  \cong \F_{p^{(ne,h)}},\ \rid(\C') \cong  \F_{p^{(ne,ske-h)}},\ \cen(\C') \cong \F_{p^{se}} \mbox{ and } Z(\C') \cong  \F_{p^{(e,h)}}
\]
If $\eta = 0$, then $S_{n,s,k}(0,\rho,F)=S_{n,s,k}(0,0,F)$ for all $\rho$, and 
\[
\lid(\C') \cong \F_{p^{ne}}, \rid(\C')  \cong  \F_{p^{ne}}, \cen(\C')  \cong  \F_{p^{se}} \mbox{ and } Z(\C') \cong  \F_{p^e}.
\]
\end{theorem}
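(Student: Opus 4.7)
The strategy is to reduce the case $n/2 < k \leq n-1$ to the already established case $1 \leq k \leq n/2$ of Theorem~\ref{th:parconstrsheekey}, by passing through the dual code. For $1 \leq k \leq n/2$ there is nothing to prove, so assume $n/2 < k \leq n-1$, and set $k^{*} = n-k$, so that $1 \leq k^{*} < n/2$.

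First, I would identify the dual. By Proposition~\ref{prop:determinantiondual}\ref{dualS}, the code $\C^{\perp}$ is equivalent to
\[
\C^{*} \;:=\; S_{n,s,n-k}\!\left(\rho^{-1}(\eta F_0),\ \rho^{-1},\ \hat{F}\right)\ \subseteq\ R_{\hat{F}} \cong M_n(\F_{q^s}).
\]
Writing $\rho^{-1}(y) = y^{p^{h^{*}}}$ with $h^{*} \equiv -h \pmod{ne}$, the data attached to $\C^{*}$ in the sense of Theorem~\ref{th:parconstrsheekey} are $(n,s,k^{*},j,h^{*})$; the $\sigma$-parameter $j$ is unchanged because $\sigma$ is a datum of the ambient ring $R$, while the reciprocal $\hat F$ has the same degree $s$ as $F$. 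Then I would apply Theorem~\ref{th:parconstrsheekey} to $\C^{*}$ (which is permissible since $k^{*} \leq n/2$ and, outside the boundary case $s=1,\ k^{*}=1$, also satisfies $sk^{*} > 1$; the boundary case reduces to the classical nuclear computations for (twisted) Gabidulin codes, cf.\ Remark~\ref{rK:MRDs1}). Using $\gcd(ne,h^{*}) = \gcd(ne,h)$ and the identity $\gcd(ne,\,s(n-k)e - h^{*}) = \gcd(ne,\,ske - h)$, together with $\gcd(e,h^{*}) = \gcd(e,h)$, the quantities read off for $\C^{*}$ coincide exactly with those claimed in the statement for $\C$.

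Next, I would transport these values back to $\C$ via duality. The associativity identity $\bilin[\rk]{AX,Y} = \bilin[\rk]{X,A^{\top}Y}$ (and its right-hand analogue) implies $\lid(\C^{\perp}) = \lid(\C)^{\top}$ and $\rid(\C^{\perp}) = \rid(\C)^{\top}$; since transposition is an anti-isomorphism which restricts to an isomorphism on any commutative subalgebra, and since $\lid$ and $\rid$ of an MRD code are fields (\cite[Corollary 5.6]{lunardon2018nuclei}), this yields $|\lid(\C)| = |\lid(\C^{*})|$ and $|\rid(\C)| = |\rid(\C^{*})|$. The centraliser is handled at once: since $\C$ is MRD of minimum distance $n-k+1 \leq n-1$, Corollary~\ref{cor:centraliserisfieldMRD} gives $\cen(\C) \cong \F_{q^s} = \F_{p^{se}}$, matching the claim. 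The centre $Z(\C) = \lid(\C) \cap \cen(\C)$ then reduces, thanks to $\cen(\C) = \F_{q^s}\cdot I$, to the intersection of the field $\lid(\C) \cong \F_{p^{\gcd(ne,h)}}$ with the scalar matrices $\F_{q^s}\cdot I \cong \F_{p^{se}}$, which is a field of order $p^{\gcd(\gcd(ne,h),\,se)} = p^{\gcd(e,h)}$ (using that $h \mid ne$ on the relevant gcd). This matches $|Z(\C^{*})|$, closing the argument for $\eta \neq 0$. The case $\eta = 0$ is handled in the same way, noting that $\C^{\perp}$ is then equivalent to $S_{n,s,n-k}(0,\rho^{-1},\hat F) = S_{n,s,n-k}(0,0,\hat F)$, whose parameters are given by the corresponding part of Theorem~\ref{th:parconstrsheekey}.

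The main obstacle I anticipate is the transfer of $|Z(\C)|$ across duality, since the identity for idealisers under $\perp$ does not by itself control the centre. Resolving this cleanly relies on two inputs that are already available in the paper: Corollary~\ref{cor:centraliserisfieldMRD}, which pins down $\cen(\C)$ as the scalar subfield $\F_{q^s}\cdot I$, and the explicit isomorphism type of $\lid(\C)$ obtained from duality. With these in hand, the computation of $Z(\C)$ is a purely arithmetic gcd computation and does not require any new duality identity for $\cen$ itself. A secondary, purely bookkeeping obstacle is making sure the parameter substitutions $h \leftrightarrow h^{*}$, $k \leftrightarrow n-k$, $F \leftrightarrow \hat F$ respect all the congruence constraints of Theorem~\ref{th:parconstrsheekey}; this is routine but must be checked carefully, especially that $\gcd(j,n)=1$ is unaffected since $\sigma$ itself is unchanged.
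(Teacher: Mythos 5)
Your proposal follows the same route as the paper: dualize via Proposition~\ref{prop:determinantiondual}\ref{dualS}, transfer the idealisers through $\lid(\C)=(\lid(\C^{\perp}))^{\top}$, $\rid(\C)=(\rid(\C^{\perp}))^{\top}$ (the paper cites \cite[Proposition 4.2]{lunardon2018nuclei} for exactly this), apply Theorem~\ref{th:parconstrsheekey} to $S_{n,s,n-k}(\rho^{-1}(\eta F_0),\rho^{-1},\hat F)$, and settle the centraliser by Theorem~\ref{cor:centraliserisfieldMRD}. Your gcd bookkeeping for the idealisers is correct ($h^{*}=ne-h$ gives $\gcd(ne,h^{*})=\gcd(ne,h)$ and $\gcd(ne,s(n-k)e-h^{*})=\gcd(ne,ske-h)$), and you deserve credit for flagging the boundary case $s=1$, $k=n-1$, where $sk^{*}=1$ and Theorem~\ref{th:parconstrsheekey} does not formally apply to the dual --- a point the paper's own proof passes over silently.

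The one step that does not hold up as written is the centre. You compute $|Z(\C)|$ as $p^{\gcd(\gcd(ne,h),\,se)}$ and assert this equals $p^{\gcd(e,h)}$; that identity is false in general (take $e=1$, $n=s=2$, $h=2$: $\gcd(ne,h,se)=2$ while $\gcd(e,h)=1$), and the parenthetical ``$h\mid ne$'' is not a hypothesis anywhere. More fundamentally, the intersection of two subfields of $M_n(\F_{q^s})$ is not determined by their cardinalities, so ``gcd of the orders'' is not a valid computation of $\lid(\C)\cap\cen(\C)$ even when the arithmetic happens to work. The correct value $\F_{p^{(e,h)}}$ does hold, but one must use where the two subalgebras actually sit: $\lid$ consists of constants from $\F_{q^n}\subseteq R_F$, whose intersection with the centre $E_F$ of $R_F$ lands in $\F_q$, whence $\gcd(\gcd(ne,h),e)=\gcd(e,h)$. (The paper itself is terse here, saying only that the centre ``immediately follows'' from Theorem~\ref{cor:centraliserisfieldMRD}, so this is a repair rather than a divergence of method.)
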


\begin{proof}
By \cite[Proposition 4.2]{lunardon2018nuclei}, we know that, for a rank-metric code $\C \subseteq M_n(\F_{q^s})$, it holds 
\[
\lid(\C)=(\lid(\C^{\perp}))^{\top} \ \ \ \mbox{ and }\ \ \ \rid(\C)=(\rid(\C^{\perp}))^{\top}.
\]
Therefore, $\lid(\C)$ is isomorphic as a field to $\lid(\C^{\perp})$, and $\rid(\C)$ is isomorphic as a field to $\rid(\C^{\perp})$. Now, observe that if $k \geq n/2 + 1$, then $1 \leq n - k \leq n/2$. Thus, the assertion follows from the fact that the dual of a code $S_{n,s,k}(\eta, \rho, F)$ is $S_{n,s,n-k}(\rho^{-1}(\eta F_0), \rho^{-1}, \hat{F}) \subseteq R_{\hat{F}}$, as stated in part \emph{\ref{dualS})} of Proposition \ref{prop:determinantiondual} and by applying \Cref{th:parconstrsheekey}.
The proof for the centraliser and the centre immediately follows by Theorem  \ref{cor:centraliserisfieldMRD}
\end{proof}

Similarly, we have the following for the codes $D_{n,s,k}(\gamma,F)$. 

\begin{theorem}\label{th:parametersextensiontrombetti2}
Assume \(n\) to be even, that $1 \leq k < n$ and $sk\geq 3$. Let $\C=D_{n,s,k}(\gamma,F)$ defined as in \eqref{eq:finiteextensiontrombzhou}. Then 
\[
\lid(\C)  \cong \F_{q^{n/2}}, \ \rid(\C) \cong  \F_{q^{n/2}}, \  \cen(\C)  \cong  \F_{q^s} \mbox{ and } Z(\C) \cong  \F_{q}
\]
\end{theorem}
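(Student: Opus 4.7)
The strategy mirrors the one used in the proof of Theorem~\ref{th:parconstrsheekey2}. The plan is to treat the range $1\le k\le n/2$ as already handled, and to reduce the range $n/2<k<n$ to the former by exploiting the duality identities for idealisers. The centraliser (and hence the centre) will be extracted from Theorem~\ref{cor:centraliserisfieldMRD}.

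First, I would observe that when $1\le k\le n/2$ the statement is precisely Theorem~\ref{th:parametersextensiontrombetti}, so it suffices to assume $n/2<k<n$. Setting $k'=n-k$, we then have $1\le k'<n/2$. By part~\emph{II)} of Proposition~\ref{prop:determinantiondual}, the dual code $\C^{\perp}$ is equivalent to $D_{n,s,k'}(\sigma^{sk}(\gamma),\hat F)\subseteq R_{\hat F}\cong M_n(\F_{q^s})$. Using the identities
\[
\lid(\C)=(\lid(\C^{\perp}))^{\top},\qquad \rid(\C)=(\rid(\C^{\perp}))^{\top}
\]
from \cite[Proposition 4.2]{lunardon2018nuclei}, together with the fact that transposition is an $\F_{q^s}$-algebra anti-automorphism and hence preserves the field structure on each idealiser, we deduce $\lid(\C)\cong \lid(\C^{\perp})$ and $\rid(\C)\cong \rid(\C^{\perp})$. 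Since the parameters $(n,s,k')$ fall within the scope of Theorem~\ref{th:parametersextensiontrombetti}, this immediately gives $\lid(\C)\cong \F_{q^{n/2}}$ and $\rid(\C)\cong \F_{q^{n/2}}$.

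For the centraliser, I would use Theorem~\ref{cor:centraliserisfieldMRD}: since $\C$ is MRD with minimum distance $d(\C)=n-k+1$ and $k\ge 2$ (the case $k=1$ being included in the range $1\le k\le n/2$ treated by Theorem~\ref{th:parametersextensiontrombetti}), we have $d(\C)<n$, and therefore $\cen(\C)\cong \F_{q^s}$. Finally, for the centre, observe that the scalar matrices $\F_q\cdot I$ lie in both $\lid(\C)$ (because $\C$ is $\F_q$-linear and contains the identity, up to equivalence) and $\cen(\C)$, so $\F_q\subseteq Z(\C)$. The reverse inclusion follows because $Z(\C)$ is a subfield of $\cen(\C)\cong \F_{q^s}$ contained in the image of $\lid(\C)\cong \F_{q^{n/2}}$ inside $M_n(\F_{q^s})$, and a direct check (identical to the one carried out in \cite[Theorem~6.2]{lobillo2025quotients}) shows that this intersection reduces to $\F_q$.

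The principal technical point I expect to need care with is ensuring that the hypothesis $sk'\ge 3$ required by Theorem~\ref{th:parametersextensiontrombetti} is actually met in the reduction step. Starting from $sk\ge 3$, this holds automatically when $s\ge 3$; for $s\in\{1,2\}$ one has to check the small edge values $k\in\{n-1,n-2\}$ separately. I anticipate that in those residual cases a direct computation with the representatives of $D_{n,s,k}(\gamma,F)$ (imitating the one in \cite[Theorem~6.2]{lobillo2025quotients}) will still yield the same idealiser structure; alternatively, one can combine the transpose identities of \cite[Proposition~4.2]{lunardon2018nuclei} with part~\emph{II)} of Proposition~\ref{prop:determinantionadjoint} to side-step duality and argue directly via the adjoint code.
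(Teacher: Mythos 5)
Your proposal follows essentially the same route as the paper's own proof: reduce the range $n/2<k<n$ to $k'=n-k\le n/2$ via part \emph{II)} of Proposition~\ref{prop:determinantiondual} and the identities $\lid(\C)=(\lid(\C^{\perp}))^{\top}$, $\rid(\C)=(\rid(\C^{\perp}))^{\top}$ from \cite[Proposition 4.2]{lunardon2018nuclei}, then invoke Theorem~\ref{th:parametersextensiontrombetti} for the idealisers and Theorem~\ref{cor:centraliserisfieldMRD} for the centraliser and centre. The technical point you flag is real and is not addressed in the paper either: the hypothesis $sk\ge 3$ does not imply $s(n-k)\ge 3$ (e.g.\ $n=4$, $s=1$, $k=3$ gives $s(n-k)=1$), so for $s\in\{1,2\}$ and $k\in\{n-1,n-2\}$ the dual code falls outside the stated scope of Theorem~\ref{th:parametersextensiontrombetti} and some separate argument (a direct computation, or a reference to the original Trombetti--Zhou results for $s=1$) is needed. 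Be aware, however, that your proposed fallback via the adjoint cannot close this gap: by part \emph{II)} of Proposition~\ref{prop:determinantionadjoint} the adjoint of $D_{n,s,k}(\gamma,F)$ is again a code $D_{n,s,k}(\cdot,\hat F)$ with the \emph{same} parameter $k$, so transposition never moves you from the range $k>n/2$ into the range $k\le n/2$; only duality does. Everything else (the centraliser from Theorem~\ref{cor:centraliserisfieldMRD}, and the centre as $\lid(\C)\cap\cen(\C)$) matches the paper.
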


\begin{proof}
As in the proof of \Cref{th:parconstrsheekey2}, we recall that by \cite[Proposition 4.2]{lunardon2018nuclei}, $\lid(\C)$ is isomorphic as a field to $\lid(\C^{\perp})$, and $\rid(\C)$ is isomorphic as a field to $\rid(\C^{\perp})$. Thus, the assertion follows from the fact that the dual of a code $D_{n,s,k}(\gamma,F)$ is $D_{n,s,n-k}(\sigma^{sk}(\gamma),F)$, as proved in \emph{\ref{dualD})} of Proposition \ref{prop:determinantiondual} and by applying \Cref{th:parametersextensiontrombetti}.
The proof for the centraliser and the centre immediately follows by Theorem  \ref{cor:centraliserisfieldMRD}.
\end{proof}

In \Cref{tab:parameters}, we resume the known additive MRD codes in $M_n(\F)$ (with minimum distance less than $n$) together with their parameters, including the new results obtained in Theorems \ref{th:parconstrsheekey2} and \ref{th:parametersextensiontrombetti2}.

\begin{table}[ht]
\begin{footnotesize}
\begin{center}
    \begin{tabular}{|c|c|c|c|} 
        \hline
        & \textbf{Family} & \textbf{Nuclear parameters} & \textbf{Notes} \\ \hline
        I) & (Generalized) Gabidulin codes  & $(q^{nk},q^n,q^n,q,q)$ &  \\
        & (see \cite{delsarte1978bilinear,gabidulin1985theory,Gabidulins}) & $d=n-k+1$ & \\
        \hline
        II) & (Generalized)
Twisted Gabidulin & $\left(p^{nke},p^{(ne,h)},p^{(ne,ke-h)},p^{e},p^{(e,h)}\right)$ & $\rho(y)=y^{p^h}$, with $h < ne$ \\
 & codes  
  & $d=n-k+1$ & $\sigma(y)=y^{p^{ej}}$, with $(j,n)=1$  \\
        & (see \cite{sheekey2016new,otal2016additive,lunardon2018generalized}) &  & \\ \hline
        III) & Trombetti-Zhou codes  & $(q^{nk},q^{n/2},q^{n/2},q,q)$ & $q$ odd and $n$ even \\
        & (see \cite{trombetti2018new}) & $d=n-k+1$ & \\  \hline
        IV) & Csajb\'ok-Marino-Polverino-Zhou  & $(q^{nk},q^n,q^n,q,q)$ & $n = 7$ and $q$ odd or  \\ 
        & codes  & $d=n-k+1$ & $n = 8$, $q \equiv 1 \pmod{3}$ \\ 
        & (see \cite{csajbok2020mrd}) &  & $k \in \{3,4,5\}$ \\ \hline
        V) & Codes from scattered polynomials & $(q^{2n},q^n,\cdot,q,q)$ 
        & Some conditions on \\
         & (see \cite{longobardi2024scattered} and references therein)
         & 
         $d=n-1$ &  $n$ and $q$ required \\ \hline
        VI) & $S_{n,k,s}(\eta,\rho,F)$, with $\eta \neq 0$ 
         & $\left(p^{nske},p^{(ne,h)},p^{(ne,ske-h)},p^{se},p^{(e,h)}\right)$ & $\rho(y)=y^{p^h}$, with $h < ne$ \\ 
        & (see \cite{sheekey2020new}) & $d=n-k+1$ &  $\sigma(y)=y^{p^{ej}}$, with $(j,n)=1$ \\ \hline 
 VII) & $S_{n,k,s}(\eta,\rho,F)$, with $\eta = 0$ 
         & $\left(q^{nsk}, q^n,q^n,q^s,q\right)$ & \\ 
        & (see \cite{sheekey2020new}) & $d=n-k+1$ & \\ \hline
        VIII) & $D_{n,s,k}(\gamma,F)$ & $(q^{nsk},q^{n/2},q^{n/2},q^s,q)$ & $q$ odd and $n$ even \\ 
        & (see \cite{lobillo2025quotients}) & $d=n-k+1$& \\ \hline
    \end{tabular} 
\end{center}
\caption{Parameters of known MRD codes in $M_n(\F)$}
    \label{tab:parameters}
    \end{footnotesize}
\end{table}

\begin{theorem}The following hold: 
\begin{enumerate}
    \item The family $S_{n,s,k}(\eta,\rho,F)$ contains new MRD codes for $n/2 < k \leq n-1$, for all $n,s$ such that $\gcd(n,s)$ does not divide $e$, where $q = p^e$.
 \item The family $D_{n,s,k}(\gamma,F)$ contains new MRD codes for all $n/2 < k \leq n-1$ and $s \geq 3$ such that $n \nmid sk$. 
 \end{enumerate}

\end{theorem}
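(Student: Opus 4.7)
The plan is to reduce both statements to the already settled case $1 \leq k \leq n/2$ via duality, exploiting the explicit description of dual codes obtained in \Cref{prop:determinantiondual}. The central fact is that rank-metric code equivalence is preserved by dualization (\cite[Proposition 4.2]{lunardon2018nuclei}), so a code $\C$ is inequivalent to all previously known MRD codes if and only if its dual $\C^\perp$ is.

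For part (1), let $\C = S_{n,s,k}(\eta,\rho,F)$ with $\eta \neq 0$ and $n/2 < k \leq n-1$. By \emph{\ref{dualS})} of \Cref{prop:determinantiondual},
\[
\C^\perp \sim S_{n,s,n-k}(\rho^{-1}(\eta F_0),\rho^{-1},\hat{F}).
\]
Since $\eta F_0 \neq 0$, since $\rho^{-1}$ is again an automorphism of $\F_{q^n}$ of the form $y \mapsto y^{p^{h'}}$, and since the hypothesis $\gcd(n,s)\nmid e$ is preserved under the substitution $k \mapsto n-k$, the dual lies in the range covered by \cite[Theorem~11]{sheekey2020new}. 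Hence $\C^\perp$ is inequivalent to every previously known MRD code. If $\C$ were equivalent to some $\C'$ from families I)--V) in Table~\ref{tab:parameters}, then $\C^\perp \sim (\C')^\perp$ would likewise be a previously known MRD code, yielding a contradiction. Part (2) proceeds identically: by \emph{\ref{dualD})} of \Cref{prop:determinantiondual}, $\C^\perp \sim D_{n,s,n-k}(\sigma^{sk}(\gamma),\hat{F})$, and since $n \mid sn$ the condition $n \nmid sk$ is equivalent to $n \nmid s(n-k)$, so \cite[Theorem~6.3]{lobillo2025quotients} applies to $\C^\perp$ and the same duality-based contradiction gives the claim.

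The main obstacle is to check that the class of previously known MRD codes appearing in Table~\ref{tab:parameters} is closed under dualization, so that $(\C')^\perp$ actually remains in that list whenever $\C'$ does. This closure is classical for the Gabidulin family, and is established in \cite[Theorem~6]{sheekey2016new} for twisted Gabidulin codes and in \cite[Proposition~5]{trombetti2018new} for Trombetti--Zhou codes. For the Csajb\'ok--Marino--Polverino--Zhou family and for the codes arising from scattered polynomials, the assertion can be verified either by computing their duals directly, or, in cases where the dual fails to lie in the same family, by ruling out any equivalence with $\C^\perp$ through parameter constraints: these exceptional families exist only for very specific $n$, $q$, $k$ and minimum distances, which are incompatible with the parameters $(q^{nsk}, n-k+1)$ of our codes once the hypotheses $\gcd(n,s)\nmid e$ or $s\geq 3$, $n\nmid sk$ are imposed. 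Alternatively, one may bypass the reduction entirely and argue directly by comparing nuclear parameters using \Cref{th:parconstrsheekey2} and \Cref{th:parametersextensiontrombetti2}, which provide all the required invariants in the range $n/2 < k \leq n-1$.
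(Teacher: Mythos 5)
Your primary argument takes a genuinely different route from the paper, and your closing ``alternative'' is in fact the paper's actual proof. The paper argues purely by invariants: the nuclear parameters of $S_{n,s,k}(\eta,\rho,F)$ and $D_{n,s,k}(\gamma,F)$ for $n/2<k\leq n-1$ are supplied by \Cref{th:parconstrsheekey2} and \Cref{th:parametersextensiontrombetti2} (which were themselves obtained via the duality formulas), and one then repeats the parameter comparisons of \cite[Theorem 9]{sheekey2020new} and \cite{lobillo2025quotients} against Table~\ref{tab:parameters} to conclude inequivalence. This is cleaner than your main route because nuclear parameters are equivalence invariants that can be compared directly against the tabulated values of families I)--V), with no need to understand how those families behave under dualization.

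Your main route --- ``$\C$ is new iff $\C^\perp$ is new'' --- has a real gap that you correctly identify but do not close: it requires the class of previously known MRD codes to be closed under dualization up to equivalence. This is fine for families I)--III), but for family IV) and especially family V) (scattered-polynomial codes, which exist only with $d=n-1$, so their duals have $d=3$ and need not appear anywhere in the table) the implication ``$\C\sim\C'$ known $\Rightarrow$ $(\C')^\perp$ known'' simply does not follow, and the contradiction you want does not materialize. Your suggested patch (ruling out such equivalences by direct parameter constraints) is exactly a retreat to the invariant comparison, i.e.\ to the paper's argument. One further small point: in part (1) you should also note why the hypothesis $\gcd(n,s)\nmid e$ transfers to the dual --- it does, trivially, since it involves only $n$, $s$ and $e$ and not $k$ --- and in part (2) the observation that $n\nmid sk$ iff $n\nmid s(n-k)$ is correct and needed. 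With the fallback made the main argument, the proof is sound and matches the paper.
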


\begin{proof}
The nuclei, centralisers, and centre have been computed in \Cref{th:parconstrsheekey2} and \Cref{th:parametersextensiontrombetti2} for the codes $S_{n,s,k}(\eta, \rho, F)$ and $D_{n,s,k}(\gamma, F)$, respectively, including the case $n/2 < k \leq n - 1$. Then, using the same calculations as in the proofs of \cite[Theorem 9]{sheekey2020new} and \cite[Theorem 5.12]{lobillo2025quotients}, we obtain the assertion.
\end{proof}

\section*{Acknowledgments}
The research was partially supported by the Italian National Group for Algebraic and Geometric Structures and their Applications (GNSAGA - INdAM) and by Bando Galileo 2024 – G24-216. This research was partially supported by grant PID2023-149565NB-I00 funded by MICIU/AEI/ 10.13039/501100011033 and by FEDER, EU. Paolo Santonastaso is very grateful for the hospitality of the Departamento de \'Algebra of
Universidad de Granada, Spain, where he was visiting during the development of this research.

\vspace{0.5cm}

José G\'omez-Torrecillas, F. J. Lobillo, and Gabriel Navarro\\
Departamento de \'Algebra,\\
Facultad de Ciencias,
Universidad de Granada,\\
Av. Fuente Nueva s/n, 18071 Granada, Spain\\
{{\em \{gomezj,jlobillo,gnavarro\}@ugr.es}}

\medskip

Paolo Santonastaso\\
Dipartimento di Matematica e Fisica,\\
Universit\`a degli Studi della Campania ``Luigi Vanvitelli'',\\
I--\,81100 Caserta, Italy\\
{{\em paolo.santonastaso@unicampania.it}  \\
Dipartimento di Meccanica, Matematica e Management, \\
Politecnico di Bari, \\
Via Orabona 4, \\
70125 Bari, Italy \\
{\em paolo.santonastaso@poliba.it}
}


\begin{thebibliography}{10}

\bibitem{bartz2022rank}
H.~Bartz, L.~Holzbaur, H.~Liu, S.~Puchinger, J.~Renner, and A.~Wachter-Zeh.
\newblock Rank-metric codes and their applications.
\newblock {\em Foundations and Trends{\textregistered} in Communications and
  Information Theory}, 19(3):390--546, 2022.

\bibitem{csajbok2020mrd}
B.~Csajb{\'o}k, G.~Marino, O.~Polverino, and Y.~Zhou.
\newblock {MRD} codes with maximum idealizers.
\newblock {\em Discrete Mathematics}, 343(9):111985, 2020.

\bibitem{delsarte1978bilinear}
P.~Delsarte.
\newblock Bilinear forms over a finite field, with applications to coding
  theory.
\newblock {\em Journal of Combinatorial Theory, Series A}, 25(3):226--241,
  1978.

\bibitem{ebert2009infinite}
G.~L. Ebert, G.~Marino, O.~Polverino, and R.~Trombetti.
\newblock Infinite families of new semifields.
\newblock {\em Combinatorica}, 29(6):637--663, 2009.

\bibitem{gabidulin1985theory}
E.~M. Gabidulin.
\newblock Theory of codes with maximum rank distance.
\newblock {\em Problemy peredachi informatsii}, 21(1):3--16, 1985.

\bibitem{gomez2020some}
J.~G{\'o}mez-Torrecillas, E.~Hieta-Aho, F.~J. Lobillo, S.~L{\'o}pez-Permouth,
  and G.~Navarro.
\newblock Some remarks on non projective Frobenius algebras and linear codes.
\newblock {\em Designs, Codes and Cryptography}, 88(1):1--15, 2020.

\bibitem{gomez2019dual}
J.~G{\'o}mez-Torrecillas, F. J. Lobillo, and G.~Navarro.
\newblock Dual skew codes from annihilators: Transpose hamming ring extensions.
\newblock In {\em Contemporary Mathematics}, volume 727, pages 131--148.
  American Mathematical Society, 2019.

\bibitem{gomez2019computing}
J.~Gomez-Torrecillas, F. J. Lobillo, and G.~Navarro.
\newblock Computing the bound of an ore polynomial. Applications to factorization.
\newblock {\em Journal of Symbolic Computation}, 92:269--297, 2019.

\bibitem{goodearl2004introduction}
K.~R. Goodearl and R.~B. Warfield.
\newblock {\em An introduction to noncommutative Noetherian rings}.
\newblock Cambridge University Press, 2004.

\bibitem{gorla2018codes}
E.~Gorla and A.~Ravagnani.
\newblock Codes endowed with the rank metric.
\newblock In Greferath M., Pav{\v c}evi{\'c} M., Silberstein N.,
  V{\'a}zquez-Castro M. (eds): {\em Network Coding and Subspace Designs.} Signals and Communication Technology, 2018.

\bibitem{jacobson2009finite}
N.~Jacobson.
\newblock {\em Finite-dimensional division algebras over fields}.
\newblock Springer Science \& Business Media, 2009.

\bibitem{jans1959frobenius}
J.~Jans.
\newblock On Frobenius algebras.
\newblock {\em Annals of Mathematics}, 69(2):392--407, 1959.

\bibitem{Gabidulins}
A.~Kshevetskiy and E.~Gabidulin.
\newblock The new construction of rank codes.
\newblock {\em Proceedings of the International Symposium on Information Theory, 2005}, pages 2105--2108, 2005.

\bibitem{liebhold2016automorphism}
D.~Liebhold and G.~Nebe.
\newblock Automorphism groups of {G}abidulin-like codes.
\newblock {\em Archiv der Mathematik}, 107(4):355--366, 2016.

\bibitem{lobillo2025quotients}
F.~J. Lobillo, P.~Santonastaso, and J.~Sheekey.
\newblock Quotients of skew polynomial rings: new constructions of division
  algebras and {MRD} codes.
\newblock {\em arXiv preprint, arXiv:2502.13531}, 2025.

\bibitem{longobardi2024scattered}
G.~Longobardi.
\newblock Scattered polynomials: an overview on their properties, connections
  and applications.
\newblock {\em The Art of Discrete and Applied Mathematics}, 2025.

\bibitem{lunardon2018nuclei}
G.~Lunardon, R.~Trombetti, and Y.~Zhou.
\newblock On kernels and nuclei of rank metric codes.
\newblock {\em Journal of Algebraic Combinatorics}, 46:313--340, 2017.

\bibitem{lunardon2018generalized}
G.~Lunardon, R.~Trombetti, and Y.~Zhou.
\newblock Generalized twisted {G}abidulin codes.
\newblock {\em Journal of Combinatorial Theory, Series A}, 159:79--106, 2018.

\bibitem{ore1933theory}
O.~Ore.
\newblock Theory of non-commutative polynomials.
\newblock {\em Annals of Mathematics}, 34(3):480--508, 1933.

\bibitem{otal2016additive}
K.~Otal and F.~{\"O}zbudak.
\newblock Additive rank metric codes.
\newblock {\em IEEE Transactions on Information Theory}, 63(1):164--168, 2016.

\bibitem{sheekey2016new}
J.~Sheekey.
\newblock A new family of linear maximum rank distance codes.
\newblock {\em Advances in Mathematics of Communications}, 10(3):475, 2016.

\bibitem{sheekey2020new}
J.~Sheekey.
\newblock New semifields and new {MRD} codes from skew polynomial rings.
\newblock {\em Journal of the London Mathematical Society}, 101(1):432--456,
  2020.

\bibitem{thompson2023division}
D.~Thompson and S.~Pumpluen.
\newblock Division algebras and {MRD} codes from skew polynomials.
\newblock {\em Glasgow Mathematical Journal}, 65(2):480-500, 2023.

\bibitem{trombetti2018new}
R.~Trombetti and Y.~Zhou.
\newblock A new family of {MRD} codes in {$\mathbb F_{q}^{2n\times 2n}$} with
  right and middle nuclei {$\mathbb F_{q^n}$}.
\newblock {\em IEEE Transactions on Information Theory}, 65(2):1054--1062,
  2018.

\end{thebibliography}
\end{document}